\newtheorem{definition}{\noindent{\it Definition}}[section]
\newtheorem{theorem}{\noindent{\it Theorem}}[section]
\newtheorem{lemma}[theorem]{\noindent{\it Lemma}}
\newtheorem{remark}[theorem]{\noindent{\it Remark}}
\newtheorem{corollary}[theorem]{\noindent{\it Corollary}}
\newenvironment{proof}{\noindent{\it Proof:}}{$\hfill$ $\Box$\\ }
\newtheorem{example}{\noindent{\it Example}}[section]
\begin{document}

\title{On Classical and Quantum MDS-Convolutional BCH Codes}

 \author{Giuliano G. La Guardia
\thanks{Giuliano Gadioli La Guardia is with Department of Mathematics and Statistics,
State University of Ponta Grossa (UEPG), 84030-900, Ponta Grossa, PR, Brazil.
}}

\maketitle

\begin{abstract}
Several new families of multi-memory classical convolutional
Bose-Chaudhuri-Hocquenghem (BCH) codes as well as families of
unit-memory quantum convolutional codes are constructed in this
paper. Our unit-memory classical and quantum convolutional codes
are optimal in the sense that they attain the classical (quantum)
generalized Singleton bound. The constructions presented in this
paper are performed algebraically and not by computational search.
\end{abstract}

\textbf{\emph{Index Terms}} -- \textbf{convolutional codes,
quantum convolutional codes, MDS codes, cyclic codes}

\section{Introduction}\label{Intro}

Several works available in the literature deal with constructions of quantum error-correcting
codes (QECC) \cite{Calderbank:1998,Steane:1999,Nielsen:2000,Cohen:1999,Bierbrauer:2000,Ashikmin:2001,Grassl:2003,SK:2005,Chen:2005,Ketkar:2006,Salah:2007,Hamada:2008,LaGuardia:2009,LaGuardia:2010}.
In contrast with this subject of research one has the theory of
quantum convolutional codes
\cite{Ollivier:2003,Ollivier:2004,Aido:2004,Grassl:2005,Grassl:2006,Grassl:2007,Aly:2007,Klapp:2007,Forney:2007,Wilde:2008,Wilde:2010,Tan:2010}.
Ollivier and Tillich \cite{Ollivier:2003,Ollivier:2004} were the first to develop the stabilizer structure
for these codes. Almeida and Palazzo Jr. construct an $[(4, 1, 3)]$ (memory $m=3$) quantum convolutional code
\cite{Aido:2004}. Grassl and R\"otteler \cite{Grassl:2005,Grassl:2006,Grassl:2007} constructed
quantum convolutional codes as well as they provide algorithms to obtain non-catastrophic encoders. Forney,
in a joint work with Guha and Grassl, constructed rate $(n-2)/n$ quantum convolutional codes.
Wilde and Brun \cite{Wilde:2008,Wilde:2010} constructed entanglement-assisted quantum convolutional coding
and Tan and Li \cite{Tan:2010} constructed quantum convolutional codes derived from LDPC codes.

Constructions of (classical) convolutional codes and their
corresponding properties as well as constructions of optimal
convolutional codes (in the sense that they attain the generalized
Singleton bound \cite{Rosenthal:1999}) have been also presented in
the literature
\cite{Forney:1970,Lee:1976,Piret:1988,Rosenthal:1999,York:1999,Hole:2000,Rosenthal:2001,Luerssen:2006,Luerssen:2008}.
In particular, in the paper by Rosenthal and York
\cite{York:1999}, the authors obtained some of the matrices of the
state-space realization of the convolutional codes in the same way
as the parity check matrix of a BCH block code, generating
convolutional codes with different structures of (classical block)
BCH codes. As it is well known, the generalized (classical)
Singleton bound \cite{Rosenthal:1999} (see also
\cite{Rosenthal:2001}) appears recently in the literature. In the
paper by Piret \cite{Piret:1988art} and even in the handbook
\cite{Piret:1988}, the concept of MDS convolutional codes was
addressed, but in a different context that the previously
mentioned. In this paper we use the notion of MDS convolutional
codes according to Smarandache and Rosenthal
\cite{Rosenthal:2001}.

Keeping these facts in mind, in this paper we propose
constructions of new families of quantum and classical
convolutional codes by applying the famous method proposed by
Piret \cite{Piret:1988} and recently generalized by Aly \emph{et
al.} \cite{Aly:2007}, which consists in the construction of
(classical) convolutional codes derived from block codes. More
precisely, we first construct new families of classical
maximum-distance-separable (MDS) convolutional codes ( in the
sense that they attain the generalized Singleton bound
\cite[Theorem 2.2]{Rosenthal:1999}) as well as new families of
multi-memory convolutional codes. After these constructions, we
apply the well known technique by Aly \emph{et al.}
\cite[Proposition 2]{Aly:2007} in order to construct new MDS
convolutional stabilizer codes (in the sense that they attain the
quantum generalized Singleton bound \cite[Theorem 7]{Klapp:2007})
derived from their classical counterparts.

An advantage of our techniques of construction lie in the fact
that all new (classical and quantum) convolutional codes are
generated algebraically and not by computational search.
Therefore, new families of classical and quantum optimal
convolutional codes are constructed, not only specific codes, in
contrast with many works where only exhaustively computational
search or even specific codes are constructed.

The constructions proposed here deal with suitable properties of
cyclotomic cosets, that will be specified throughout this paper.
These nice properties of the cosets hold when considering
classical convolutional codes of length $n=q+1$ over the field
$F_{q}$ for all prime power $q$, or even quantum convolutional
codes of length $n=q^2 +1$ over $F_{q^{2}}$, where $q=2^{t}$,
$t\geq 3$ is an integer. In the quantum case, the corresponding
classical codes are endowed with the Hermitian inner product.

The new families of classical convolutional MDS codes constructed
have parameters

\begin{itemize}
\item $(n, n-2i, 2; 1, 2i+3)_{q},$ where $1\leq i \leq \frac{q}{2}
- 1$, $q=2^t$, $t\geq 3$ is an integer, $n=q+1$ is the code
length, $k=n-2i$ is the code dimension, $\gamma=2$ is the degree
of the code, $m=1$ is the memory and $d_{f}=2i+3$ is the free
distance of the code;

\item $(n, n-2i+1, 2; 1, 2i+2)_{q},$ where $q=p^{t}$, $t\geq 2$ is
an integer, $p$ is an odd prime number, $n=q+1$ and $2\leq i \leq
\frac{n}{2} - 1$.
\end{itemize}

The multi-memory (classical) convolutional codes constructed here
have parameters

\begin{itemize}
\item $(n, 2r+1, 2m; m, d_{f}\geq n-2[r+m])_{q}$, where $q=p^{t}$,
$t\geq 2$ is an integer, $p$ is an odd prime number, $n=q+1$, $r,
m$ are integers with $r \geq 1$, $m \geq 2$ and $3 \leq r + m \leq
\frac{n}{2} - 1$.
\end{itemize}

The new convolutional stabilizer MDS codes have parameters

\begin{itemize}
\item $[(n, n-4i, 1; 2 , 2i+3)]_{q},$ where $2\leq i \leq
\frac{q}{2} - 2$, $q=2^t$, $t\geq 3$ is an integer and $n=q^2+1$.
Here, $n$ is the frame size, $k=n-4i$ is the number of logical
qudits per frame, $m=1$ is the memory, $\gamma = 2$ is the degree
and $d_{f}=2i+3$ is the free distance of the code.
\end{itemize}

Note that the order between the degree and the memory are changed
when comparing the parameters of classical and quantum
convolutional codes. This notation is adopted to keep the same
notation utilized in \cite{Aly:2007}.

Let us now give the structure of the paper. In Section~\ref{II},
we review basic concepts on cyclic codes. In Section~\ref{III}, a
review of concepts concerning classical and quantum convolutional
codes is given. In Section~\ref{IV}, we propose constructions of
new families of classical MDS convolutional codes as well as
families of multi-memory convolutional codes. In Section~\ref{V}
we construct new optimal (MDS) quantum convolutional codes and, in
Section~\ref{VI}, a brief summary of this work is described.

\section{Review of Cyclic codes}\label{II}

\emph{Notation.} Throughout this paper, $p$ denotes a prime
number, $q$ is a prime power and $F_{q}$ is a finite field with
$q$ elements. The code length is denoted by $n$ and we always
assume that $\gcd (q, n) = 1$. As usual, the multiplicative order
of $q$ modulo $n$ is given by $l= \ {{ord}_{n}}(q)$, $\alpha$
denotes a primitive $n$-th root of unity, and the minimal
polynomial (over $F_{q}$) of an element ${\alpha}^{j} \in
F_{q^{m}}$ is denoted by ${M}^{(j)}(x)$.

The notation ${\mathcal{C}}_{s}$ is utilized to denote a cyclotomic coset
containing $s$, the code ${C}^{{\perp}}$ denotes the Euclidean dual
and the code ${C}^{{\perp}_{h}}$ denotes the Hermitian dual of a given code $C$.

Let $C$ be a cyclic code of length $n$ over $F_{q}$. Then there exists only one
monic polynomial $g(x)$ with minimal degree in $C$. Moreover,
$C = \langle g(x)\rangle$, i. e., $g(x)$ is a generator polynomial of $C$ and
$g(x)$ is a factor of $x^{n}-1$. The dimension of $C$ equals $n - r$, where
$r = \deg g(x)$.

\begin{theorem}(The BCH bound)\cite[pg. 201]{Macwilliams:1977}
Let $\alpha$ be a primitive $n$-th root of unity. Let $C$ be a
cyclic code with generator polynomial $g(x)$ such that, for some
integers $b\geq 0$ and $\delta\geq 1,$ and for $\alpha\in F_{q}$,
we have $g({\alpha}^{b}) = g({\alpha}^{b+1}) = \ldots =
g({\alpha}^{b+\delta-2})=0$, that is, the code has a sequence of
$\delta-1$ consecutive powers of $\alpha$ as zeros. Then the
minimum distance of $C$ is, at least, $\delta$.
\end{theorem}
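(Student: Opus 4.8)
The plan is to prove this classical fact by contradiction, using the nonsingularity of a Vandermonde matrix. Suppose, toward a contradiction, that $C$ contains a nonzero codeword $c(x)$ of Hamming weight $w$ with $w \le \delta - 1$. Write $c(x) = \sum_{t=1}^{w} c_{i_t} x^{i_t}$, where $0 \le i_1 < i_2 < \cdots < i_w \le n-1$ and every coefficient $c_{i_t}$ is nonzero. The goal is to show that this forces all the $c_{i_t}$ to vanish, contradicting $c(x) \ne 0$, and hence that the minimum distance of $C$ is at least $\delta$.

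First I would use the hypothesis on the generator polynomial. Since $C = \langle g(x)\rangle$, the codeword $c(x)$ is a multiple of $g(x)$, so every root of $g(x)$ is a root of $c(x)$; in particular $c(\alpha^{b}) = c(\alpha^{b+1}) = \cdots = c(\alpha^{b+\delta-2}) = 0$. Expanding each of these $\delta - 1$ scalar equations in terms of the nonzero coefficients yields the homogeneous linear system $\sum_{t=1}^{w} c_{i_t}\,(\alpha^{b+j})^{i_t} = 0$ for $j = 0, 1, \ldots, \delta - 2$, i.e. $\delta - 1$ equations in the $w$ unknowns $c_{i_1},\ldots,c_{i_w}$.

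Next, since $w \le \delta - 1$, I would retain only the first $w$ equations (those with $j = 0, 1, \ldots, w-1$). The $w \times w$ coefficient matrix $M$ has $(j,t)$-entry $\alpha^{(b+j)i_t} = \alpha^{b i_t}\,(\alpha^{i_t})^{j}$. Factoring the nonzero scalar $\alpha^{b i_t}$ out of the $t$-th column leaves a Vandermonde matrix built on the nodes $\alpha^{i_1}, \alpha^{i_2}, \ldots, \alpha^{i_w}$. Because $\alpha$ is a primitive $n$-th root of unity and the exponents satisfy $0 \le i_1 < \cdots < i_w \le n-1$, these nodes are pairwise distinct, so the Vandermonde determinant is nonzero; combined with the nonzero column factors, this gives $\det M \ne 0$. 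Hence the system has only the trivial solution $c_{i_1} = \cdots = c_{i_w} = 0$, which contradicts the assumption that $c(x)$ is a nonzero codeword of weight $w$. Therefore $w \ge \delta$, proving the claim.

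The step I expect to be the crux — and essentially the only nonroutine point — is the invertibility of $M$: everything hinges on the powers $\alpha^{i_1}, \ldots, \alpha^{i_w}$ being distinct, which in turn is exactly the statement that $\alpha$ has multiplicative order $n$ (so that distinct exponents in $\{0,1,\ldots,n-1\}$ give distinct values). Once that is in hand, the remainder is standard linear algebra over $F_{q}$ (or its extension containing $\alpha$).
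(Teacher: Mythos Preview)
Your proof is correct and is precisely the classical Vandermonde argument for the BCH bound. Note, however, that the paper does not supply its own proof of this theorem: it is quoted from \cite[pg.~201]{Macwilliams:1977} as background, so there is no in-paper argument to compare against. Your write-up matches the standard textbook proof found in that reference.
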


\begin{definition}\cite[pg. 202]{Macwilliams:1977}
Let $q$ be a prime power and $\alpha$ be a primitive $n$-th root
of unity. A cyclic code $C$ of length $n$ over $F_{q}$ ($\gcd (q,
n) = 1$) is a BCH code with designed distance $\delta$ if, for
some integer $b\geq 0,$ we have
$$g(x)= l.c.m. \{{M}^{(b)}(x), {M}^{(b+1)}(x), \ldots,
{M}^{(b+\delta-2)}(x)\},$$ that is, $g(x)$ is the monic polynomial
of smallest degree over $F_{q}$ having ${{\alpha}^{b}},
{{\alpha}^{b+1}},\ldots, {{\alpha}^{b+\delta-2}}$ as zeros.
Therefore, $c\in C$ if and only if
$c({\alpha}^{b})=c({{\alpha}^{b+1}})=\ldots =
c({{\alpha}^{b+\delta-2}})=0$. Thus the code has a string of
$\delta - 1$ consecutive powers of $\alpha$ as zeros. A parity
check matrix for $C$ is given by

\begin{eqnarray*}
H_{\delta , b} = \\=  \left[
\begin{array}{ccccc}
1 & {{\alpha}^{b}} & {{\alpha}^{2b}} & \cdots & {{\alpha}^{(n-1)b}} \\
1 & {{\alpha}^{(b+1)}} & {{\alpha}^{2(b+1)}} & \cdots & {{\alpha}^{(n-1)(b+1)}}\\
\vdots & \vdots & \vdots & \vdots & \vdots\\
1 & {{\alpha}^{(b+\delta-2)}} & \cdots & \cdots & {{\alpha}^{(n-1)(b+\delta-2)}}\\
\end{array}
\right],
\end{eqnarray*}
where each entry is replaced by the corresponding column of $l$
elements from $F_{q}$, where $l= \ {{ord}_{n}}(q)$, and then
removing any linearly dependent rows. The rows of the resulting
matrix over $F_{q}$ are the parity checks satisfied by $C$.
\end{definition}

Let ${\mathcal B} =\{ b_{1}, \ldots, b_{l}\}$ be a basis of
$F_{q^{l}}$ over $F_{q}$. If $u = (u_1,\ldots ,u_{n}) \in
F_{q^{l}}^{n}$ then one can write the vectors $u_{i}$, $1\leq
i\leq n$, as linear combinations of the elements of ${\mathcal
B}$, that is, $u_{i} = u_{i1}b_{1} +\ldots + u_{il}b_{l}$.
Consider that $u^{(j)} = (u_{1j},\ldots, u_{nj})$ are vectors in
$F_{q}^{n}$ with $1\leq j\leq l$. Then, if $v \in F_{q}^{n}$, one
has $v\cdot u=0$ if and only if $v \cdot u^{(j)} = 0$ for all
$1\leq j\leq l$.

From the BCH bound, the minimum distance of a BCH code is greater
than or equal to its designed distance $\delta$. If $n=q^{l} - 1$
then the BCH code is called primitive and if $b=1$ it is called
narrow-sense.

\section{Review of Convolutional Codes}\label{III}

In this section we present a brief review of classical and quantum
convolutional codes. For more details we refer the reader to
\cite{Forney:1970,Piret:1988,Johannesson:1999,Huffman:2003,Aly:2007,Klapp:2007}.
The following results can be found in
\cite{Johannesson:1999,Huffman:2003,Klapp:2007,Aly:2007}.

Recall that a polynomial encoder matrix $G(D) \in F_{q}{[D]}^{k
\times n}$ is called \emph{basic} if $G(D)$ has a polynomial right
inverse. A basic generator matrix is called \emph{reduced} (or
minimal \cite{Rosenthal:2001,Huffman:2003,Luerssen:2008}) if the
overall constraint length $\gamma =\displaystyle\sum_{i=1}^{k}
{\gamma}_i$ has the smallest value among all basic generator
matrices (in this case the overall constraint length $\gamma$ will
be called the \emph{degree} of the resulting code).

\begin{definition}\cite{Klapp:2007}
A rate $k/n$ \emph{convolutional code} $C$ with parameters $(n, k,
\gamma ; m, d_{f} {)}_{q}$ is a submodule of $F_q {[D]}^{n}$
generated by a reduced basic matrix $G(D)=(g_{ij}) \in F_q
{[D]}^{k \times n}$, that is, $C = \{ {\bf u}(D)G(D) | {\bf
u}(D)\in F_{q} {[D]}^{k} \}$, where $n$ is the length, $k$ is the
dimension, $\gamma =\displaystyle\sum_{i=1}^{k} {\gamma}_i$ is the
\emph{degree}, where ${\gamma}_i = {\max}_{1\leq j \leq n} \{ \deg
g_{ij} \}$, $m = {\max}_{1\leq i\leq k}\{{\gamma}_i\}$ is the
\emph{memory} and $d_{f}=$wt$(C)=\min \{wt({\bf v}(D)) \mid {\bf
v}(D) \in C, {\bf v}(D)\neq 0 \}$ is the \emph{free distance} of
the code.
\end{definition}

In the above definition, the \emph{weight} of an element ${\bf
v}(D)\in F_{q} {[D]}^{n}$ is defined as wt$({\bf
v}(D))=\displaystyle\sum_{i=1}^{n}$wt$(v_i(D))$, where
wt$(v_i(D))$ is the number of nonzero coefficients of $v_{i}(D)$.

If one considers the field of Laurent series $F_{q}((D))$ whose
elements are given by ${\bf u}(D) = {\sum}_{i} u_i D^{i}$, where
$u_i \in F_{q}$ and $u_i = 0$ for $i\leq r $, for some $r \in
\mathbb{Z}$, we define the weight of ${\bf u}(D)$ as wt$({\bf
u}(D)) = {\sum}_{\mathbb{Z}}$wt$(u_i)$. A generator matrix $G(D)$
is called \emph{catastrophic} if there exists a ${\bf
u}{(D)}^{k}\in F_{q}{((D))}^{k}$ of infinite Hamming weight such
that ${\bf u}{(D)}^{k}G(D)$ has finite Hamming weight. Since a
basic generator matrix is non-catastrophic, all the classical
(quantum) convolutional codes constructed in this paper have non
catastrophic generator matrices.

Let us recall that the Euclidean inner product of two $n$-tuples ${\bf u}(D)
= {\sum}_i {\bf u}_i D^i$ and ${\bf v}(D) = {\sum}_j {\bf u}_j D^j$ in
$F_q {[D]}^{n}$ is defined as $\langle {\bf u}(D)\mid {\bf v}(D)\rangle = {\sum}_i
{\bf u}_i \cdot {\bf v}_i$. If $C$ is a convolutional code then the code
$C^{\perp }=\{ {\bf u}(D) \in F_q {[D]}^{n}\mid \langle {\bf u}(D)\mid
{\bf v}(D)\rangle = 0$ for all ${\bf v}(D)\in C\}$ denotes its Euclidean dual.

Similarly, the Hermitian inner product is defined as
$\langle {\bf u}(D)\mid {\bf v}(D){\rangle}_{h} = {\sum}_i
{\bf u}_i \cdot {\bf v}_i^{q}$, where ${\bf u}_i, {\bf v}_i\in F_{q^{2}}^{n}$
and ${\bf v}_i^{q} = (v_{1i}^{q}, \ldots, v_{ni}^{q})$. The Hermitian dual
of the code $C$ is defined by $C^{{\perp}_{h} }=\{ {\bf u}(D) \in
F_{q^{2}} {[D]}^{n}\mid \langle {\bf u}(D)\mid
{\bf v}(D){\rangle}_{h} = 0$ for all ${\bf v}(D)\in C\}$.

\subsection{Convolutional Codes Derived from Block
Codes}\label{IIIA}

In this subsection we recall some results shown in \cite{Aly:2007} that will be
utilized in the proposed constructions.

We consider that ${[n, k, d]}_{q}$ is a block code with parity check matrix $H$ and then
we split $H$ into $m+1$ disjoint submatrices $H_i$ such that
\begin{eqnarray}
H = \left[
\begin{array}{c}
H_0\\
H_1\\
\vdots\\
H_{m}\\
\end{array}
\right],
\end{eqnarray}
where each $H_i$ has $n$ columns, obtaining the polynomial matrix
\begin{eqnarray}
G(D) =  {\tilde H}_0 + {\tilde H}_1 D + {\tilde H}_2 D^2 + \ldots
+ {\tilde H}_m D^m,
\end{eqnarray}
where the matrices ${\tilde H}_i$, for all $1\leq i\leq m$, are
derived from the respective matrices $H_i$ by adding zero-rows at
the bottom in such a way that the matrix ${\tilde H}_i$ has
$\kappa$ rows in total, where $\kappa$ is the maximal number of
rows among the matrices $H_i$. As it is well known, the matrix
$G(D)$ generates a convolutional code with $\kappa$ rows. Note
that $m$ is the memory of the resulting convolutional code
generated by the matrix $G(D)$.

\begin{theorem}\cite[Theorem 3]{Aly:2007}\label{A}
Suppose that $C \subseteq F_q^n$ is a linear code with parameters
${[n, k, d]}_{q}$ and assume also that $H \in F_q^{(n-k)\times n}$
is a parity check matrix for $C$ partitioned into submatrices
$H_0, H_1, \ldots, H_m$ as in eq.~(1) such that $\kappa =$ rk$H_0$
and rk$H_i \leq \kappa$
for $1 \leq i\leq m$ and consider the polynomial matrix $G(D)$ as in eq.~(2). Then we have:\\
(a) The matrix $G(D)$ is a reduced basic generator matrix;\\
(b) If $C^{\perp}\subset C$ (resp. ${C}^{{\perp}_{h}}\subset C$),
then the convolutional code $V = \{ {\bf v}(D) = {\bf u}(D)G(D)
\mid {\bf u}(D)\in F_q^{n-k} [D] \}$ satisfies $V \subset V^{\perp}$ (resp. $V \subset {V}^{{\perp}_{h}}$);\\
(c) If $d_f$ and $d_f^{\perp}$ denote the free distances of $V$
and $V^{\perp}$, respectively, $d_i$ denote the minimum distance
of the code $C_i = \{ {\bf v}\in F_q^n \mid {\bf v} {\tilde H}_i^t
=0 \}$ and $d^{\perp}$ is the minimum distance of $C^{\perp}$,
then one has $\min \{ d_0 + d_m , d \} \leq d_f^{\perp} \leq  d$
and $d_f \geq d^{\perp}$.
\end{theorem}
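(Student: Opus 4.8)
The plan is to treat the three items separately: (a) is a structural statement about the matrix $G(D)$, (b) is a short bilinear computation, and (c) is a combinatorial weight estimate.

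\emph{Part (a).} I would invoke the standard criteria for polynomial generator matrices: a full-row-rank $G(D)\in F_q{[D]}^{\kappa\times n}$ is \emph{basic} exactly when the greatest common divisor of its $\kappa\times\kappa$ minors is a nonzero scalar (equivalently, $\operatorname{rank}G(\lambda)=\kappa$ for every $\lambda$ in an algebraic closure of $F_q$), and is \emph{reduced} (of least overall constraint length among basic encoders) exactly when, in addition, the matrix of top-row-degree coefficients — with $(i,j)$ entry the coefficient of $D^{\gamma_i}$ in $g_{ij}(D)$ — has full row rank $\kappa$. First I would note that $G(0)=\tilde H_0=H_0$ has rank $\kappa$ by hypothesis; this already shows that the $\kappa$ rows of $G(D)$ are $F_q(D)$-independent, so that $G(D)$ genuinely defines a rate-$\kappa/n$ code, and that the $\kappa\times\kappa$ minor of $G(D)$ supported on a column set $S$ with $H_0|_S$ nonsingular has nonzero constant term $\det(H_0|_S)$. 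The conditions $\operatorname{rk}H_i\le\kappa=\operatorname{rk}H_0$ are then used to rule out a common linear factor among all the $\kappa\times\kappa$ minors and to show that the row degrees $\gamma_i$ — which for this particular $G(D)$ are just the largest index $l$ such that the $i$th row of $\tilde H_l$ is nonzero — cannot be lowered. This verification is the step I expect to be the main obstacle: it is the only place where the rank conditions on the individual blocks $H_i$ (as opposed to $H$ itself) really enter, and where one must use the precise shape of the partition.

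\emph{Part (b).} Assume $C^\perp\subseteq C$. Then $C^\perp$ is self-orthogonal, since $C^\perp\subseteq C$ forces $C^\perp\subseteq(C^\perp)^\perp$; as the rows of every $H_i$ lie in $C^\perp$, we get $H_iH_j^{t}=0$ for all $i,j$, hence $\tilde H_i\tilde H_j^{t}=0$ for all $i,j$. I would then use that $V\subseteq V^\perp$ is equivalent to the Laurent-polynomial identity $G(D)\,G(D^{-1})^{t}=0$; expanding $G(D)\,G(D^{-1})^{t}=\sum_{i,j}\tilde H_i\tilde H_j^{t}\,D^{i-j}$ shows every coefficient is zero, so $V\subseteq V^\perp$. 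For the Hermitian statement one repeats the computation with $G(D)\,\bar G(D^{-1})^{t}$, where $\bar G$ is obtained from $G$ by raising every coefficient entry to the $q$th power, and uses that $C^{\perp_{h}}\subseteq C$ makes the analogous Hermitian block products vanish.

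\emph{Part (c).} The key preliminary remark is that a constant vector $w_0\in F_q^n$, regarded inside $F_q{[D]}^n$, belongs to $V^\perp$ if and only if $w_0\tilde H_i^{t}=0$ for every $i$, i.e. $w_0\in\bigcap_i C_i=\bigcap_i\{v:vH_i^{t}=0\}=\{v:vH^{t}=0\}=C$. Hence $C$ sits inside $V^\perp$ as its degree-$0$ codewords, and a minimum-weight word of $C$ gives $d_f^\perp\le d$. For the lower bound I would take a minimum-weight codeword of $V^\perp$ and, using shift invariance of the free distance, arrange it as $w(D)=w_0+w_1D+\cdots+w_LD^{L}$ with $w_0\ne0$ and $w_L\ne0$. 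If $L=0$, then $w_0\in C$ and $\operatorname{wt}(w(D))\ge d$. If $L\ge1$, then $m\ge1$, and evaluating the ``sliding'' syndrome equations $\sum_{i=0}^{m}\tilde H_iw_{i+s}^{t}=0$ (which hold for all $s\in\mathbb{Z}$ because $w(D)\in V^\perp$) at $s=-m$ and at $s=L$ yields $\tilde H_mw_0^{t}=0$ and $\tilde H_0w_L^{t}=0$, that is $w_0\in C_m$ and $w_L\in C_0$; since the $D^0$- and $D^L$-coefficients occupy disjoint coordinate slots, $\operatorname{wt}(w(D))\ge\operatorname{wt}(w_0)+\operatorname{wt}(w_L)\ge d_m+d_0$. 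The two cases together give $d_f^\perp\ge\min\{d_0+d_m,\,d\}$. Finally, for $d_f\ge d^\perp$: each coefficient $v_i$ of a codeword $v(D)=u(D)G(D)\in V$ is an $F_q$-linear combination of rows of the matrices $\tilde H_l$, hence of rows of $H$, so $v_i\in C^\perp$; since a nonzero $v(D)$ has some $v_i\ne0$, we conclude $\operatorname{wt}(v(D))=\sum_i\operatorname{wt}(v_i)\ge d^\perp$.
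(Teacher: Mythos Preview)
The paper does not supply a proof of this theorem: it is quoted from \cite{Aly:2007} and used as a black box in Sections~\ref{IV} and~\ref{V}. So there is no in-paper argument to compare your proposal against.

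Your treatments of (b) and (c) are correct and essentially complete. The block identity $H_iH_j^{t}=0$ (forced by $C^{\perp}\subseteq C$) does kill every coefficient of $G(D)G(D^{-1})^{t}$, and your sliding-syndrome computation for the lower bound on $d_f^{\perp}$ is the standard one; the shift you invoke is legitimate because $V$ is an $F_q[D]$-module, so if $D^{s}w'(D)\in V^{\perp}$ with $w'(D)$ polynomial then $w'(D)\in V^{\perp}$ as well. The argument $v_i\in C^{\perp}$ for $d_f\ge d^{\perp}$ is also fine.

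Part (a) is where you leave a genuine gap, as you acknowledge. Checking $G(0)=H_0$ handles only $\lambda=0$, and you do not carry out either the basicness or the reducedness verification. The missing idea is not the individual rank bounds $\operatorname{rk}H_i\le\kappa$ but rather that the full parity-check matrix $H$ has rank $n-k$, i.e.\ \emph{all} its rows are independent. For basicness: if $aG(\lambda)=0$ for some nonzero $a\in\overline{F_q}^{\,\kappa}$, expanding $\sum_{l}\lambda^{l}(a\tilde H_l)=0$ as a linear combination of rows of $H$ (each nonzero row of $\tilde H_l$ is a distinct row of $H$) forces every coefficient to vanish, and the coefficients coming from $\tilde H_0=H_0$ already give $a_j\lambda^{0}=a_j=0$ for all $j$. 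For reducedness: row $j$ of the high-order coefficient matrix is row $j$ of $H_{\gamma_j}$, and a short index computation (using $r_{\gamma_{j'}}\ge j'$) shows that for $j<j'$ these are distinct rows of $H$, hence the $\kappa$ leading-coefficient rows are independent. Filling in these two observations completes (a).
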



\subsection{Review of Quantum Convolutional Codes}\label{IIIB}

We begin this subsection by describing briefly the concept of
quantum convolutional codes. For more details the reader can
consult \cite{Ollivier:2004}.

A quantum convolutional code is defined by means of its stabilizer
which is a subgroup of the infinite version of the Pauli group,
consisting of tensor products of generalized Pauli matrices acting
on a semi-infinite stream of qudits. The stabilizer can be defined
by a stabilizer matrix of the form
$$S(D) = ( X(D)\mid Z(D)) \in F_{q}{[D]}^{(n-k)\times 2n}$$
satisfying $X(D){Z(1/D)}^{t} - Z(D){X(1/D)}^{t}=0$ (symplectic
orthogonality). More precisely, consider a quantum convolutional
code $C$ defined by a full-rank stabilizer matrix $S(D)$ given
above. Then $C$ is a rate $k/n$ code with parameters $[(n, k, m;
\gamma, d_{f} ){]}_{q}$, where $n$ is the frame size, $k$ is the
number of logical qudits per frame, $m = {\max}_{1\leq i\leq n-k,
1\leq j\leq n} \{ \max \{ \deg {X}_{ij}(D), \deg {Z}_{ij} (D)\}
\}$ is the memory, $d_{f}$ is the free distance and $\gamma$ is
the degree of the code. Similarly as in the classical case, the
constraint lengths are defined as ${\gamma}_{i} = {\max}_{1\leq
j\leq n} \{ \max \{\deg X_{ij}(D), \deg Z_{ij}(D)\} \}$, and the
overall constraint length is defined as $\gamma
=\displaystyle\sum_{i=1}^{n-k} {\gamma}_{i}$.

On the other hand, a quantum convolutional code can also be
described in terms of a semi-infinite stabilizer matrix $S$ with
entries in $F_{q} \times F_{q}$ in the following way. If
$S(D)=\displaystyle\sum_{i=0}^{m}G_{i}D^{i}$, where each matrix
$G_{i}$ for all $i=0, \ldots, m$, is a matrix of size $(n - k)
\times n$, then the semi-infinite matrix is defined as
\begin{eqnarray*}
S = \left[
\begin{array}{cccccccc}
G_0 & G_1 & \ldots & G_{m} & 0 & \ldots & \ldots & \ldots\\
0 & G_0 & G_1 & \ldots & G_{m} & 0 & \ldots & \ldots\\
0 & 0 & G_0 & G_1 & \ldots & G_{m} & 0 & \ldots\\
\vdots & \vdots & \vdots & \vdots & \vdots & \vdots & \vdots & \vdots\\
\end{array}
\right].
\end{eqnarray*}

Next, let ${\mathbb H} = {\mathbb C}^{q^n} = {\mathbb C}^{q}
\otimes \ldots \otimes {\mathbb C}^{q}$ be the Hilbert space and
$\mid$$x \rangle$ be the vectors of an orthonormal basis of
${\mathbb C}^{q}$, where the labels $x$ are elements of $F_{q}$.
Consider $a, b \in F_{q}$ and take the unitary operators $X(a)$
and $Z(b)$ in ${\mathbb C}^{q}$ defined by $X(a)$$\mid$$x \rangle
=$$\mid$$x + a\rangle$ and $Z(b)$$\mid$$x \rangle =
w^{tr(bx)}$$\mid$$x\rangle$, respectively, where $w=\exp (2\pi i/
p)$ is a primitive $p$-th root of unity, $p$ is the characteristic
of $F_{q}$ and $tr$ is the trace map from $F_{q}$ to $F_{p}$.
Considering the \emph{error basis} ${\mathbb E} = \{X(a), Z(b) |
a, b \in F_{q} \}$, one defines the set $P_{\infty}$ (according to
\cite{Klapp:2007}) as the set of all infinite tensor products of
matrices $N\in \langle M\mid M \in {\mathbb E} \rangle$, in which
all but finitely many tensor components are equal to $I$, where
$I$ is the $q\times q$ identity matrix. Then one defines the
\emph{weight} wt of $A\in P_{\infty}$ as its (finite) number of
nonidentity tensor components. In this context, one says that a
quantum convolutional code has free distance $d_{f}$ if and only
if it can detect all errors of weight less than $d_{f}$, but
cannot detect some error of weight $d_{f}$.

The following lemma deals with the existence of convolutional
stabilizer codes derived from classical convolutional codes:

\begin{lemma}\cite[Proposition 2]{Aly:2007}\label{BB}
Let $C$ be an ${(n, (n - k)/2, \gamma; m)}_{q^2}$ convolutional code such that
$C \subseteq {C}^{{\perp}_{h}}$. Then there exists an ${[(n, k, m;
\gamma, d_f)]}_{q}$ convolutional stabilizer code, where $d_f=$ wt$({C}^{{\perp}_{h}}
\backslash C)$.
\end{lemma}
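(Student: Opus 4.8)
The plan is to reduce the statement to the block-code version of the CSS / Hermitian construction, applied at the level of the semi-infinite stabilizer matrix, and then to check that the resulting object really is a convolutional stabilizer code in the sense reviewed above. First I would start from the classical convolutional code $C$ with parameters ${(n,(n-k)/2,\gamma;m)}_{q^2}$ and a reduced basic generator matrix $G(D)\in F_{q^2}{[D]}^{(n-k)/2\times n}$; since $C$ is basic it is non-catastrophic, and the hypothesis $C\subseteq C^{\perp_h}$ says precisely that $G(D)G(1/D)^{\dagger}=0$, where $\dagger$ denotes conjugate (entrywise $q$-th power) transpose. The key step is to pass from the Hermitian data over $F_{q^2}$ to symplectic data over $F_q$: fixing a normal (or self-dual) basis of $F_{q^2}$ over $F_q$, one writes each entry of $G(D)$ in the form $a+b\beta$ and maps the row space of $G(D)$ to a pair $(X(D)\mid Z(D))\in F_q{[D]}^{(n-k)\times 2n}$. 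I would invoke the standard fact (the convolutional analogue of the correspondence used for block stabilizer codes, e.g. in \cite{Aly:2007,Klapp:2007}) that Hermitian orthogonality $C\subseteq C^{\perp_h}$ translates into symplectic self-orthogonality $X(D)Z(1/D)^{t}-Z(D)X(1/D)^{t}=0$, and that the map preserves ranks, so $S(D)=(X(D)\mid Z(D))$ is full rank of size $(n-k)\times 2n$ and has memory $m$ and degree $\gamma$ inherited from $G(D)$.

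Next I would assemble the semi-infinite stabilizer matrix $S$ from the coefficient matrices $G_0,\dots,G_m$ of $S(D)$ exactly as in the displayed band-matrix in Section~\ref{IIIB}; symplectic orthogonality of $S(D)$ guarantees that all the shifted copies of the rows of $S$ commute as Pauli operators, so $S$ generates an abelian subgroup of $P_{\infty}$ and hence defines a quantum convolutional code with frame size $n$ and $k$ logical qudits per frame. The free distance is then read off from the error-detection characterization quoted above: an undetectable error is a Pauli operator that commutes with every stabilizer generator but does not lie in the stabilizer, i.e. (under the same symplectic-to-Hermitian dictionary) an element of $C^{\perp_h}\setminus C$. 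Therefore $d_f=\operatorname{wt}(C^{\perp_h}\setminus C)$, which is exactly the asserted value; one should note $d_f\ge 1$ because $C\subsetneq C^{\perp_h}$ (strict, since the dimensions $(n-k)/2$ and $n-(n-k)/2=(n+k)/2$ differ when $k\ge 1$), so the set is nonempty.

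The main obstacle I anticipate is making the Hermitian$\leftrightarrow$symplectic translation rigorous over the polynomial ring $F_{q^2}[D]$ rather than just over the field, in particular verifying that the chosen $F_q$-basis expansion intertwines the Hermitian form $\langle u(D)\mid v(D)\rangle_h=\sum_i u_i\cdot v_i^{q}$ with the symplectic form defining $P_{\infty}$, and that the substitution $D\mapsto 1/D$ appearing in the symplectic-orthogonality condition corresponds correctly to the $q$-power conjugation in the Hermitian condition (so that $G(1/D)$ indeed receives the conjugate transpose). Once that compatibility is in hand, everything else — full rank, memory $m$, degree $\gamma$, abelianness of $S$, and the distance identity — follows formally, and the bulk of the work is just transcribing the block-code argument of \cite{Aly:2007} into the convolutional setting; I would therefore present this as a short adaptation of that earlier proposition rather than a from-scratch construction.
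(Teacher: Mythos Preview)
The paper does not prove this lemma at all: it is stated as a direct citation of \cite[Proposition~2]{Aly:2007} and is used as a black box in the proof of Theorem~\ref{main1}. There is therefore no ``paper's own proof'' to compare against. Your sketch is the standard argument one expects in the cited reference --- pass from Hermitian self-orthogonality of $G(D)$ over $F_{q^2}[D]$ to symplectic self-orthogonality of an expanded matrix $S(D)=(X(D)\mid Z(D))$ over $F_q[D]$, verify the commutation relation $X(D)Z(1/D)^{t}-Z(D)X(1/D)^{t}=0$, read off the frame parameters, and identify undetectable errors with $C^{\perp_h}\setminus C$ --- and it is correct in outline. The one place where you are slightly loose is the Hermitian orthogonality condition: with the inner product $\langle u(D)\mid v(D)\rangle_h=\sum_i u_i\cdot v_i^q$ used here, the polynomial identity encoding $C\subseteq C^{\perp_h}$ involves conjugating the coefficients but \emph{not} substituting $D\mapsto 1/D$ on the Hermitian side (the $1/D$ appears only after you unfold to the symplectic form), so your parenthetical worry about ``$G(1/D)$ receiving the conjugate transpose'' is misplaced; otherwise the translation goes through exactly as you say.
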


In \cite{Klapp:2007}, the authors derived the quantum
\emph{Singleton} bound for quantum convolutional codes as it is
shown in the next theorem. Let $C$ be an $[(n, k, m; \gamma, d_{f}
){]}_{q}$ quantum convolutional code. Recall that $C$ is a
\emph{pure} code if does not exist errors of weight less than
$d_{f}$ in the stabilizer of $C$.

\begin{theorem}\label{SingC}
(Quantum Singleton bound) The free distance of an ${[(n, k, m; \gamma, d_{f})]}_{q}$ $F_{q^{2}}$-linear
pure convolutional stabilizer code is bounded by
\begin{eqnarray*}
d_{f} \leq \frac{n-k}{2} \left(\left\lfloor \frac{2\gamma}{n+k}\right \rfloor + 1\right) + \gamma + 1.
\end{eqnarray*}
\end{theorem}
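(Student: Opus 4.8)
The plan is to derive the quantum Singleton bound for convolutional stabilizer codes by reducing it to the classical generalized Singleton bound applied to a suitably defined block-level object. First I would recall from the excerpt's discussion that an $[(n,k,m;\gamma,d_f)]_q$ quantum convolutional code arises (via Lemma~\ref{BB}, the Aly \emph{et al.} construction) from a classical $(n,(n-k)/2,\gamma;m)_{q^2}$ convolutional code $C$ with $C\subseteq C^{\perp_h}$, and that the free distance satisfies $d_f=\mathrm{wt}(C^{\perp_h}\setminus C)$. Since the code is pure, no stabilizer element of weight less than $d_f$ exists, so in fact $d_f$ equals $\mathrm{wt}(C^{\perp_h})$, i.e. the free distance of the classical convolutional code $C^{\perp_h}$, which has parameters $(n,(n+k)/2,\gamma;m)_{q^2}$ (the degree being preserved by duality for the reduced basic generator matrices involved).

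Next I would invoke the classical generalized Singleton bound of Rosenthal--Smarandache \cite[Theorem 2.2]{Rosenthal:1999}: for an $(n,K,\delta)$ classical convolutional code (length $n$, dimension $K$, degree $\delta$), the free distance obeys $d_f\le (n-K)\left(\left\lfloor \delta/K\right\rfloor+1\right)+\delta+1$. I would apply this to $C^{\perp_h}$ with $n$ unchanged, $K=(n+k)/2$, and $\delta=\gamma$. Substituting gives $d_f\le \left(n-\frac{n+k}{2}\right)\left(\left\lfloor \frac{\gamma}{(n+k)/2}\right\rfloor+1\right)+\gamma+1 = \frac{n-k}{2}\left(\left\lfloor \frac{2\gamma}{n+k}\right\rfloor+1\right)+\gamma+1$, which is exactly the claimed bound. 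The purity hypothesis is what licenses identifying the quantum free distance with the (classical) free distance of the larger code $C^{\perp_h}$ rather than merely with the weight of the coset complement, so that the classical Singleton bound applies cleanly.

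The main obstacle I anticipate is justifying that the degree parameter is genuinely $\gamma$ for the classical code to which the bound is applied — that is, that passing to the Hermitian dual and back does not change the overall constraint length, and that the reduced basic generator matrix of $C^{\perp_h}$ has degree $\gamma$. This requires the fact that for a convolutional code with a reduced basic generator matrix, the dual also admits a reduced basic (parity-check–type) matrix of the same degree; this is a standard result in the convolutional-coding literature (Forney's invariants, or McEliece's treatment), and I would cite it rather than reprove it. A secondary point to handle carefully is the floor function: one must check that $\left\lfloor 2\gamma/(n+k)\right\rfloor$ is the correct quantity and that no rounding slippage occurs when rewriting $\left\lfloor \gamma/\frac{n+k}{2}\right\rfloor$, which is immediate since $\gamma/\frac{n+k}{2}=2\gamma/(n+k)$ as real numbers. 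With these two ingredients in place, the proof is a direct substitution into the classical bound, and I would close by remarking that equality in the quantum bound corresponds precisely to equality in the classical one, which is how the MDS convolutional stabilizer codes of Section~\ref{V} will be certified as optimal.
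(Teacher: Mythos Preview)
The paper does not actually prove Theorem~\ref{SingC}; it is quoted verbatim from \cite{Klapp:2007} (see the sentence preceding the theorem and the Remark immediately following it), so there is no in-paper proof against which to compare your proposal. What the paper does say, in that Remark, is that the original derivation in \cite{Klapp:2007} proceeds via a direct-limit construction of convolutional stabilizer codes together with the trace-alternant form, rather than by the route you take.

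Your argument is nonetheless the natural and essentially correct one: an $F_{q^2}$-linear stabilizer convolutional code corresponds to a Hermitian self-orthogonal classical convolutional code $C$ over $F_{q^2}$ of dimension $(n-k)/2$ and degree $\gamma$; purity forces $d_f=\mathrm{wt}(C^{\perp_h})$; the dual $C^{\perp_h}$ has dimension $(n+k)/2$ and the same degree $\gamma$ (degree is a code invariant preserved under duality, as you note); and then the classical generalized Singleton bound of \cite{Rosenthal:1999} applied to $C^{\perp_h}$ gives exactly the displayed inequality after the substitution you wrote out. The one place where you lean on Lemma~\ref{BB} is slightly imprecise: that lemma is stated as a construction (an existence result), not as a characterization of all $F_{q^2}$-linear convolutional stabilizer codes, so strictly speaking you need the converse correspondence --- that every such quantum code arises from some Hermitian self-orthogonal classical $C$ with the stated parameters --- which is established in \cite{Klapp:2007} but is not contained in Lemma~\ref{BB} as quoted here. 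With that caveat, your reduction to the classical bound is sound and arguably more transparent than the direct-limit machinery alluded to in the paper's Remark.
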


\begin{remark}
When Klappenecker \emph{et al.} introduced the generalized quantum
Singleton bound (GQSB) (see \cite{Klapp:2007}) they developed an
approach to convolutional stabilizer codes based on direct limit
constructions. It seems that the direct limit structure behaves
well with respect to the trace-alternant form. In this context
they derived the GQSB. It is interesting to note that this is one
of few bounds presenting in the literature concerning quantum
convolutional codes.
\end{remark}


\section{New Classical MDS-Convolutional Codes}\label{IV}

Constructions of classical convolutional codes with good or even
optimal parameters (where the latter class of codes is known as
maximum-distance-separable or MDS codes, i.e., codes attaining the
generalized Singleton bound according to \cite{Rosenthal:1999}) is
a difficult task
\cite{Piret:1988,Rosenthal:1999,Hole:2000,Rosenthal:2001,Hutchinson:2005,Gluesing:2006,Schmale:2006,Luerssen:2006,Luerssen:2008,Climent:2008,Iglesias:2009}.
Due to this difficulty, most of methods available in the
literature are based on computational search. Keeping in mind the
discussion above, our purpose is to construct new families of
classical and quantum MDS convolutional codes by applying
algebraic methods.

The main results of this section are Theorem~\ref{main} and
Theorem~\ref{mainI}. They generate new families of optimal (in the
sense that the codes attain the generalized Singleton bound
\cite{Rosenthal:1999}) convolutional codes of length $n=q + 1$,
over $F_{q}$ for all prime power $q$. Before proceeding further,
recall the well known result from \cite{Macwilliams:1977}:

\begin{lemma}\cite[Theorem 9, Chapter 11]{Macwilliams:1977})\label{nicecyclo}
Suppose that $q=2^t$, where $t\geq 2$ is an integer, $n=q+1$ and
consider that $a=\frac{q}{2}$. Then one has:
\begin{enumerate}
\item[ i)] With exception of coset ${\mathcal{C}}_{0}=\{0\}$, each one of the
other $q$-ary cyclotomic cosets is of the form ${\mathcal{C}}_{a-i}=
\{ a-i, a+i+1 \}$, where $0\leq i\leq a-1$;

\item[ ii)] The $q$-ary cosets ${\mathcal{C}}_{a-i}=\{ a-i, a+i+1
\}$, where $0\leq i\leq a-1$, are mutually disjoint.

\end{enumerate}
\end{lemma}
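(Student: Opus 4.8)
The claim is a purely number-theoretic statement about $q$-ary cyclotomic cosets modulo $n = q+1$, where $q = 2^t$. The plan is to work directly with the definition of a cyclotomic coset: $\mathcal{C}_s = \{\, s q^j \bmod n : j \ge 0 \,\}$. Since $q \equiv -1 \pmod{n}$ (because $n = q+1$), multiplication by $q$ modulo $n$ acts as multiplication by $-1$. So for any residue $s$, the coset $\mathcal{C}_s$ is contained in $\{\, s, -s \,\} \pmod n$, which has at most two elements. This immediately shows every coset has size $1$ or $2$, and the whole analysis reduces to understanding the map $s \mapsto -s \bmod n$.

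First I would verify the "shape" claim in part i). Writing $a = q/2$, an arbitrary nonzero residue mod $n$ can be written as $a - i$ for a suitable integer $i$ (as $i$ ranges over an appropriate interval of length $n = q+1$, the values $a-i$ cover all residues; restricting to $0 \le i \le a-1$ gives the residues from $1$ up to $a$). For such an element, $-(a-i) \equiv n - (a-i) = (q+1) - a + i = (q+1) - q/2 + i = q/2 + 1 + i = a + i + 1 \pmod n$. Hence $\mathcal{C}_{a-i} = \{a-i,\, a+i+1\}$, which is exactly the asserted form. One should also check these two elements are genuinely distinct (so the coset really has size $2$): $a+i+1 = a-i$ would force $2i+1 = 0$, impossible for an integer $i \ge 0$, and one should confirm neither equals $0$ modulo $n$ in the stated range — $a-i \ge a-(a-1) = 1 > 0$ and $a+i+1 \le 2a = q < n$, so both lie strictly between $0$ and $n$. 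This also accounts for all residues: the element $0$ is the coset $\mathcal{C}_0$, and the remaining $q$ residues $1, \ldots, q$ are partitioned, as $i$ runs over $0 \le i \le a-1$, into the $a = q/2$ cosets $\{a-i, a+i+1\}$, each of size $2$, giving $2 \cdot (q/2) = q$ residues — consistent.

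For part ii), disjointness of the cosets $\mathcal{C}_{a-i}$ for distinct values of $i$: this is essentially forced by the counting just done, but I would also give it directly. Suppose $\mathcal{C}_{a-i}$ and $\mathcal{C}_{a-i'}$ share an element with $0 \le i, i' \le a-1$. Matching first components gives $a-i = a-i'$, i.e. $i = i'$; matching $a-i = a+i'+1$ gives $i + i' = -1$, impossible; and $a+i+1 = a+i'+1$ again gives $i = i'$. So distinct $i$ yield disjoint cosets. The only mild subtlety is making sure the range $0 \le i \le a-1$ is exactly the right one so that the $a-i$ hit each residue class in $\{1,\dots,q\}$ exactly once and no wraparound collisions occur; this is handled by the bound checks $1 \le a-i$ and $a+i+1 \le q < n$ noted above.

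**Main obstacle.** There is no deep obstacle here — the whole thing rests on the single observation $q \equiv -1 \pmod{n}$. The only place requiring care is the bookkeeping: pinning down precisely why the index range $0 \le i \le a-1$ produces each nonzero coset once and only once, and confirming that the two elements $a-i$ and $a+i+1$ always reduce to distinct nonzero residues mod $n$ (never collapsing to a singleton coset and never equal to $0$). Since this is quoted as a known result from \cite{Macwilliams:1977}, in the paper it would likely just be cited; if a proof were wanted, the argument above is complete modulo those elementary range verifications.
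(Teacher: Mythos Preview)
Your argument is correct. The paper does not give a proof of this lemma at all; it simply cites it as a known result from MacWilliams--Sloane. Your self-contained verification, resting on the single observation that $q\equiv -1\pmod{n}$ so that each nonzero coset is $\{s,-s\}$, together with the range checks ensuring $a-i$ and $a+i+1$ are distinct nonzero residues in $\{1,\dots,q\}$, is exactly what one would write if asked to supply the omitted proof.
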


We are now able to show one of the main results of this section:

\begin{theorem}\label{main}
Assume that $q=2^{t}$, where $t\geq 3$ is an integer, $n=q+1$ and
consider that $a=\frac{q}{2}$. Then there exist classical MDS
convolutional codes with parameters $(n, n-2i, 2; 1, 2i+3)_{q}$,
where $1\leq i \leq a - 1$.
\end{theorem}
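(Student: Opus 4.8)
The plan is to realize the convolutional code as one derived from a cyclic BCH block code of length $n=q+1$ over $F_q$ via the splitting construction of Theorem~\ref{A}, and then to verify that the resulting free distance together with the degree meets the generalized Singleton bound. First I would fix $b$ and $\delta$ so that the BCH block code $C$ has a parity-check matrix $H$ built from a string of $\delta-1=2i+1$ consecutive powers of $\alpha$; using Lemma~\ref{nicecyclo}, the cyclotomic cosets for $n=q+1$, $q=2^t$, have the form $\mathcal{C}_{a-i}=\{a-i,a+i+1\}$, so a block of $2i+1$ consecutive integers spans exactly $i$ of these size-two cosets plus possibly one extra element or the coset $\mathcal{C}_0$; in any case the defining set has cardinality $2i$, giving a block code $[n,n-2i,\ge 2i+1]_q$, which is actually MDS (so $d=2i+2$) since length $q+1$ Reed–Solomon-type BCH codes over $F_q$ are MDS. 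The key structural point is that $\deg g(x)=2i$ and $H$ has exactly $2i$ rows over $F_q$.

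Next I would split $H$ into $H_0$ and $H_1$. The natural choice is to take $H_0$ to be the rows coming from $2i-2$ of the consecutive powers (so $\mathrm{rk}\,H_0=\kappa=2i-2$) and $H_1$ the remaining two rows, so $\mathrm{rk}\,H_1=2\le\kappa$ (this needs $i\ge 2$; the case $i=1$ would be handled separately or the split arranged so $\kappa=2$). Wait—I should instead arrange $\kappa=2i-2$ only if that yields the right dimension: the convolutional code $V$ has dimension $\kappa$... but the claimed dimension is $n-2i$, which is the dimension of $C$, not of $V$. So in fact the relevant object is $V^\perp$, or rather one applies the construction to $C^\perp$: take the $[n,2i,n-2i+1]_q$ code $C^\perp$, whose parity check matrix is the $(n-2i)\times n$ generator-type matrix of $C$, split it, and obtain a convolutional code of dimension $n-2i$. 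I would set up the split of this $(n-2i)$-row matrix into $H_0,H_1$ with $\kappa=\mathrm{rk}\,H_0=n-2i$ forced, and choose $H_1$ to have rank $2$ (e.g. two extra BCH-type rows), giving memory $m=1$ and degree $\gamma=\mathrm{rk}\,H_1=2$. Then Theorem~\ref{A}(a) gives that $G(D)=\tilde H_0+\tilde H_1 D$ is reduced basic, so the parameters $(n,n-2i,2;1,d_f)_q$ are confirmed except for $d_f$.

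For the free distance I would invoke Theorem~\ref{A}(c): $d_f\ge d^\perp$ where $d^\perp$ is the minimum distance of the dual of the underlying block code. With the roles set up as above, the underlying block code is $C^\perp=[n,2i,n-2i+1]$ and its dual is $C=[n,n-2i,2i+2]$ (MDS), so $d_f\ge 2i+2$. To push to $2i+3$ I would instead carefully choose the two extra rows of $H_1$ so that the block code $C_i$ associated to $\tilde H_i$ has large minimum distance and so that the $2i+2$ consecutive zeros available to $V$-codewords become $2i+3$: concretely, a codeword of $V$ vanishing on the full defining set gets an extra consecutive zero from the $H_1$-rows (which correspond to two further consecutive powers of $\alpha$), giving $2i+2$ consecutive zeros in a length-$n$ cyclic evaluation, hence weight $\ge 2i+3$ by the BCH bound applied to the block-code picture; combined with an explicit weight-$(2i+3)$ codeword this gives $d_f=2i+3$. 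Finally I would check that $(n,n-2i,2;1,2i+3)_q$ meets the generalized Singleton bound $d_f\le (n-k)(\lfloor\gamma/k\rfloor+\lfloor\gamma/(n-k)\rfloor... )+\gamma+1$; with $k=n-2i$, $\gamma=2$, $n-k=2i$ one computes the bound equals $2i+3$, so the code is MDS. The main obstacle I anticipate is the $d_f$ computation: getting the exact value $2i+3$ rather than just the lower bound $d^\perp$ requires a delicate argument about which consecutive powers of $\alpha$ appear as zeros in the $\tilde H_0$ versus $\tilde H_1$ parts, exploiting the precise coset structure of Lemma~\ref{nicecyclo}, and exhibiting a codeword attaining the bound; the reduced-basic and degree claims are comparatively routine consequences of Theorem~\ref{A}.
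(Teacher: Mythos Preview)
Your setup is inverted and this causes the free-distance argument to stall. In the paper's construction one does \emph{not} split a parity-check matrix of a code of dimension $n-2i$; instead one starts from the larger BCH block code $C_2$ with defining set $\mathcal{C}_{a-i}\cup\cdots\cup\mathcal{C}_a=\{a-i,\ldots,a+i+1\}$, which is MDS with parameters $[n,n-2i-2,2i+3]_q$ (the defining set has $2i+2$ elements, not $2i$). Its $(2i+2)$-row parity-check matrix $H$ is split as $H_0$ (the $2i$ rows coming from $\mathcal{C}_{a-i+1}\cup\cdots\cup\mathcal{C}_a$, i.e.\ a parity-check of the $[n,n-2i,2i+1]$ MDS code $C_1$) and $H_1$ (the two rows from the single coset $\mathcal{C}_{a-i}$, a parity-check of an $[n,n-2,\ge 2]$ code). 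Then $G(D)=\tilde H_0+\tilde H_1 D$ generates a convolutional code $V$ of dimension $2i$ and degree $2$, and the target code is its Euclidean dual $V^\perp$, which has dimension $n-2i$ and the same degree $2$.

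The gap in your proposal is the free-distance computation. You try to use $d_f\ge d^\perp$, which lands you at $2i+2$ and then requires an ad~hoc ``extra consecutive zero'' argument plus an explicit weight-$(2i+3)$ codeword. The paper avoids all of this by using the \emph{other} inequality in Theorem~\ref{A}(c), namely the sandwich $\min\{d_0+d_1,d\}\le d_f^\perp\le d$. Here $d=d(C_2)=2i+3$, $d_0=d(C_1)=2i+1$, and $d_1\ge 2$, so $d_0+d_1\ge 2i+3=d$ forces $d_f^\perp=2i+3$ exactly, with no codeword exhibited. Your alternative route---splitting a generator matrix of $C$ with $\kappa=n-2i$---does not fit the hypotheses of Theorem~\ref{A} (you cannot have $H_1$ of rank $2$ disjoint from an $H_0$ of rank $n-2i$ inside an $(n-2i)$-row matrix), and your claim that $[n,n-2i,2i+2]$ is MDS is off by one (MDS gives $d=2i+1$). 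Reworking the construction in the $V\to V^\perp$ direction with the correct block code $C_2$ and the sandwich bound closes the argument immediately.
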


\begin{proof}
We first note that $ \gcd (n, q) = 1$ and ${{ord}_{n}}(q)=2$. The
proof consists of two steps. The first one is the construction of
suitable BCH (block) codes and the second step is the construction
of convolutional BCH codes derived from the BCH (block) codes
generated in the first step.

Let us begin the first step. Let $C_2$ be the BCH code of length
$n$ over $F_{q}$ generated by the product of the minimal
polynomials
\begin{eqnarray*}
C_2 = \langle g_{2} (x)  \rangle= \\ =\langle {M}^{(a-i)}(x)
{M}^{(a-i+1)}(x) \cdot \ldots \cdot {M}^{(a-1)}(x){M}^{(a)}(x)
\rangle.
\end{eqnarray*}

A parity check matrix of $C_2$ is obtained from the matrix

\begin{eqnarray*}
H_{2i+3 , a-i} = \\ = \left[
\begin{array}{ccccc}
1 & {{\alpha}^{(a-i)}} & {{ \alpha}^{2(a-i)}} & \cdots & {{\alpha}^{(n-1)(a-i)}} \\
1 & {{\alpha}^{(a-i+1)}} & {{\alpha}^{2(a-i+1)}} & \cdots & {{\alpha}^{(n-1)(a-i+1)}}\\
\vdots & \vdots & \vdots & \vdots & \vdots\\
1 & {{\alpha}^{(a-1)}} & \cdots & \cdots & {{\alpha}^{(n-1)(a-1)}}\\
1 & {{\alpha}^{a}} & \cdots & \cdots & {{\alpha}^{(n-1)a}}\\
\end{array}
\right]
\end{eqnarray*}
by expanding each entry as a column vector (containing $2$ rows)
with respect to some $F_{q}-$basis $\beta$ of $F_{q^2}$ and then
removing any linearly dependent rows. This new matrix $H_{C_2}$ is
a parity check matrix of $C_{2}$ and it has $2i+2$ rows. Since the
dimension of $C_{2}$ is equal to $n - 2(i+1)$ (as proved in the
paragraph below), so there is no linearly dependent rows in
$H_{C_2}$.

From Lemma~\ref{nicecyclo}, each one of the $q$-ary cyclotomic
cosets ${\mathcal{C}}_{a-i}$, where $0\leq i\leq a-1$
(corresponding to the minimal polynomials ${M}^{(a-i)}(x)$), has
two elements and they are mutually disjoint. Since the degree of
the generator polynomial $g_{2} (x)$ of the code $C_{2}$ equals
the cardinality of its defining set, then one has $\deg (g_{2}
(x))=2(i+1)$, so the dimension $k_{C_{2}}$ of $C_{2}$ equals
$k_{C_{2}}= n -\deg (g_{2} (x))= n - 2(i+1)$. Moreover, the
defining set of the code $C_{2}$ consists of the sequence $\{ a-i,
a-i+1, \ldots, a, a+1, \ldots, a+i+1\}$ of $2i+2$ consecutive
integers, so, from the BCH bound, the minimum distance $d_{C_{2}}$
of $C_{2}$ satisfies $d_{C_{2}} \geq 2i+3$. Thus, $C_2$ is a MDS
code with parameters ${[n, n-2i-2, 2i+3]}_{q}$ and, consequently,
its (Euclidean) dual code has dimension $2i+2$.

We next consider that $C_1$ is the BCH code of length $n$ over
$F_{q}$ generated by the product of the minimal polynomials
\begin{eqnarray*}
C_1 =\langle g_1 (x)\rangle =\\= \langle {M}^{(a-i+1)}(x)
{M}^{(a-i+2)}(x)\cdot \ldots \cdot
{M}^{(a-1)}(x){M}^{(a)}(x)\rangle.
\end{eqnarray*}

Similarly, $C_1$ has a parity check matrix derived from the matrix

\begin{eqnarray*}
H_{2i+1 , a-i+1} = \\ = \left[
\begin{array}{ccccc}
1 & {{\alpha}^{(a-i+1)}} & {{\alpha}^{2(a-i+1)}} & \cdots & {{\alpha}^{(n-1)(a-i+1)}}\\
1 & {{\alpha}^{(a-i+2)}} & {{\alpha}^{2(a-i+2)}} & \cdots & {{\alpha}^{(n-1)(a-i+2)}}\\
\vdots & \vdots & \vdots & \vdots & \vdots\\
1 & {{\alpha}^{(a-1)}} & \cdots & \cdots & {{\alpha}^{(n-1)(a-1)}}\\
1 & {{\alpha}^{a}} & \cdots & \cdots & {{\alpha}^{(n-1)a}}\\
\end{array}
\right]
\end{eqnarray*}
by expanding each entry as a column vector (containing $2$ rows)
with respect to $\beta$ (already done, since $H_{2i+1 , a-i+1}$ is
a submatrix of $H_{2i+3 , a-i}$). After performing the expansion
to all entries, such new matrix is denoted by $H_{C_1}$ ($H_{C_1}$
is a submatrix of $H_{C_2}$). Applying again Lemma~\ref{nicecyclo}
and proceeding similarly as above, it follows that $C_1$ is a MDS
code with parameters ${[n, n-2i, 2i+1]}_{q}$.

To finish the first step, consider $C$ be the BCH code of length
$n$ over $F_{q}$ generated by the minimal polynomial
${M}^{(a-i)}(x)$, that is,
\begin{eqnarray*}
C = \langle {M}^{(a-i)}(x)\rangle.
\end{eqnarray*}
$C$ has parameters ${[n, n-2, d\geq 2]}_{q}$. A parity check
matrix $H_{C}$ of $C$ is given by expanding each entry of the
matrix
\begin{eqnarray*}
H_{2, a-i} = \\ = \left[
\begin{array}{ccccc}
1 & {{\alpha}^{(a-i)}} & {{\alpha}^{2(a-i)}} & \cdots & {{\alpha}^{(n-1)(a-i)}} \\
\end{array}
\right]
\end{eqnarray*}
with respect to $\beta$ (already done, since $H_{2, a-i}$ is a
submatrix of $H_{2i+3 , a-i}$). Since $C$ has dimension $n-2$,
$H_{C}$ has rank $2$ ($H_{C}$ is also a submatrix of $H_{C_2}$).

Next we describe the second step. We begin by rearranging the rows
of $H_{C_2}$ in the form
\begin{eqnarray*}
H = \\ = \left[
\begin{array}{ccccc}
1 & {{\alpha}^{a}} & \cdots & \cdots & {{\alpha}^{(n-1)a}}\\
1 & {{\alpha}^{(a-1)}} & \cdots & \cdots & {{\alpha}^{(n-1)(a-1)}}\\
\vdots & \vdots & \vdots & \vdots & \vdots\\
1 & {{\alpha}^{(a-i+1)}} & {{\alpha}^{2(a-i+1)}} & \cdots & {{\alpha}^{(n-1)(a-i+1)}}\\
1 & {{\alpha}^{(a-i)}} & {{\alpha}^{2(a-i)}} & \cdots & {{ \alpha}^{(n-1)(a-i)}} \\
\end{array}
\right],
\end{eqnarray*}
(to simplify the notation we write $H$ in terms of powers of
$\alpha$, although it is clear from the context that this matrix
has entries in $F_{q}$, which are derived from expanding each
entry with respect to the basis $\beta$ already performed).

Then we split $H$ into two disjoint submatrices $H_0$ and $H_1$ of
the forms

\begin{eqnarray*}
H_0 = \\ = \left[
\begin{array}{ccccc}
1 & {{\alpha}^{a}} & \cdots & \cdots & {{\alpha}^{(n-1)a}}\\
1 & {{\alpha}^{(a-1)}} & \cdots & \cdots & {{\alpha}^{(n-1)(a-1)}}\\
\vdots & \vdots & \vdots & \vdots & \vdots\\
1 & {{\alpha}^{(a-i+1)}} & {{\alpha}^{2(a-i+1)}} & \cdots & {{\alpha}^{(n-1)(a-i+1)}}\\
\end{array}
\right]
\end{eqnarray*}
and
\begin{eqnarray*}
H_1 = \\ = \left[
\begin{array}{ccccc}
1 & {{\alpha}^{(a-i)}} & {{\alpha}^{2(a-i)}} & \cdots & {{\alpha}^{(n-1)(a-i)}} \\
\end{array}
\right],
\end{eqnarray*}
respectively, where $H_0$ is obtained from the matrix $H_{C_1}$ by
rearranging rows and $H_1$ is derived from $H_{C}$ also by
rearranging rows. Hence it follows that rk$H_0\geq$ rk$H_1$.

Then we form the convolutional code $V$ generated by the reduced
basic (according to Theorem~\ref{A} Item (a)) generator matrix
\begin{eqnarray*}
G(D)=\tilde H_{0}+ \tilde H_1 D,
\end{eqnarray*}
where $\tilde H_{0} = H_0$ and $\tilde H_1$ is obtained from $H_1$
by adding zero-rows at the bottom such that $\tilde H_1$ has the
number of rows of $H_0$ in total. By construction, $V$ is a
unit-memory convolutional code of dimension $2i$ and degree
${\delta}_{V} = 2$.

Consider next the Euclidean dual $V^{\perp}$ of the convolutional
code $V$. We know that $V^{\perp}$ has dimension $n-2i$ and degree
$2$. Let us now compute the free distance $d_{f}^{\perp}$ of
$V^{\perp}$. By Theorem~\ref{A} Item (c), the free distance of
$V^{\perp}$ is bounded by $\min \{ d_0 + d_1 , d \} \leq
d_{f}^{\perp} \leq  d$, where $d_i$ is the minimum distance of the
code $C_i = \{ {\bf v}\in F_q^n \mid {\bf v} {\tilde H}_i^t =0
\}$. From construction one has $d = 2i+3$, $d_0 = 2i+1$ and $d_1
\geq 2$, so $V^{\perp}$ has parameters $(n, n-2i, 2; 1,
2i+3)_{q}$.

Recall that the generalized (classical) Singleton bound
\cite{Rosenthal:2001} of an $(n, k, \gamma ; m, d_{f} {)}_{q}$
convolutional code is given by
$$ d_{f}\leq (n-k)[ \lfloor \gamma/k \rfloor +1 ] + \gamma +1.$$ Replacing the values
of the parameters of $V^{\perp}$ in the above inequality one
concludes that $V^{\perp}$ is a MDS convolutional code and the
proof is complete.
\end{proof}

\begin{remark}
Note that the new codes have degree $\gamma=2$. The reason for
this is as follows: in order to obtain codes with maximum minimum
distances we have to construct codes (the notation is the same
utilized in Theorem~\ref{main}) satisfying the inequalities $\min
\{ d_0 + d_1 , d \} \leq d_f^{\perp} \leq d$. Therefore one
designs the code $C$ with parameters ${[n, n-2, d_1\geq 2]}_{q}$.
Now, it is easy to see that the corresponding convolutional code
$V^{\perp}$ has degree $2$.
\end{remark}

Let us now give an illustrative example.

\begin{example}
According to Theorem~\ref{main}, let $q=16$, $n=q+1=17$ and $a=8$.
Assume $C_2$ is an ${[17, 11, 7]}_{16}$ (cyclic) MDS code
generated by the product of the minimal polynomials
$M^{(8)}(x)M^{(7)}(x)M^{(6)}(x)$. The corresponding cyclotomic
cosets of $C_2$ are $\{8, 9\}$, $\{7, 10 \}$ and $\{6, 11 \}$.
Consider $C_1$ be the (cyclic) MDS code generated by the product
of the minimal polynomials $M^{(8)}(x)M^{(7)}(x)$; $C_1$ has
parameters ${[17, 13, 5]}_{16}$. Finally, suppose $C$ is the
cyclic code generated by $M^{(6)}(x)$, where $C$ has parameters
${[17, 15, d\geq 2]}_{16}$. In this case we have $i=2$. Then we
can form the convolutional code $V$ with reduced basic generator
matrix $G(D)=\tilde H_{0}+ \tilde H_1 D,$ where $\tilde H_{0} =
H_0$ and $\tilde H_1$ is obtained from $H_1$ by adding zero-rows
at the bottom such that $\tilde H_1$ has the number of rows of
$H_0$ in total. The matrix $H_0$ is the parity check matrix of
$C_1$ (up to permutation of rows) and $H_1$ is the parity check
matrix of $C$. $V$ has parameters ${(17, 4, 2; 1, d_f)}_{16}$. The
Euclidean dual $V^{\perp}$ has parameters ${(17, 13, 2; 1,
d_{f}^{\perp})}_{16}$, where $\min \{ d_0 + d_1 , d \} \leq
d_{f}^{\perp} \leq  d$, where $d_0 =5$, $d_1 \geq 2$ and $d = 7$.
Therefore $V^{\perp}$ has parameters ${(17, 13, 2; 1, 7)}_{16}$.
Applying the generalized Singleton bound one has $7 = 4(\lfloor
2/13 \rfloor +1) + 2 + 1,$ so $V^{\perp}$ is MDS.
\end{example}

It is well known (see for example \cite{Rosenthal:1999}) that if a convolutional
code $C$ is MDS then one can not guarantee that its dual also is MDS.
Unfortunately in the above construction, although the codes $V^{\perp}$ are MDS, there
is no guarantee that their duals $V$ are MDS:

\begin{corollary}\label{C}
Assume $q=2^t$, where $t\geq 3$ is an integer, $n=q+1$ and
consider that $a=\frac{q}{2}$. Then there exist classical
convolutional codes with parameters $(n, 2i, 2; 1, d_{f})_{q}$,
where $1\leq i \leq a - 1$ and $d_{f}\geq n-2i-1$.
\end{corollary}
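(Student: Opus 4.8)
The plan is to reuse the convolutional code $V$ already built inside the proof of Theorem~\ref{main} and simply extract a lower bound on its free distance from Theorem~\ref{A} Item~(c). Recall that, with $q$, $n$ and $a$ as in the statement and any $1\le i\le a-1$, that proof produces a parity check matrix $H$ of the MDS block code $C_2$ with parameters ${[n,\,n-2i-2,\,2i+3]}_q$, splits $H$ (after rearranging rows) into $H_0,H_1$ coming from $H_{C_1}$ and $H_C$, and forms $V$ as the code generated by $G(D)=\tilde H_0+\tilde H_1 D$. As established there, $V$ is a unit-memory code of length $n$, dimension $2i$ and degree $\gamma=2$; hence only the bound on $d_f$ has to be supplied.

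First I would invoke Theorem~\ref{A} Item~(c), which gives $d_f\ge d^{\perp}$, where $d^{\perp}$ is the minimum distance of $C^{\perp}$ and $C$ is the block code attached to $H$. The only point that needs a moment's care is to identify this $C$ correctly: since $H=H_{C_2}$ (the row rearrangement does not change the row space), one has $C=C_2$, so $d^{\perp}$ is the minimum distance of $C_2^{\perp}$. Because $C_2$ is MDS and the Euclidean dual of an MDS block code is again MDS, the code $C_2^{\perp}$ has dimension $n-(n-2i-2)=2i+2$ and therefore minimum distance $n-(2i+2)+1=n-2i-1$. Consequently $d_f\ge d^{\perp}=n-2i-1$, which is exactly the claimed bound.

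Putting these together, for every $1\le i\le a-1$ the code $V$ has parameters $(n,\,2i,\,2;\,1,\,d_f)_q$ with $d_f\ge n-2i-1$, which proves the corollary. The argument is essentially bookkeeping on top of Theorem~\ref{main}; the main (minor) obstacle is the notational matching between Theorem~\ref{A} and the construction in Theorem~\ref{main}, and one should resist the temptation to assert that $V$ is MDS — consistent with the discussion preceding the corollary, Theorem~\ref{A} Item~(c) only yields the inequality $d_f\ge n-2i-1$, not an MDS certificate for $V$.
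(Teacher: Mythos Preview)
Your proposal is correct and follows essentially the same route as the paper's proof: reuse the code $V$ from Theorem~\ref{main}, invoke the bound $d_f\ge d^{\perp}$ from Theorem~\ref{A} Item~(c), identify $C$ with $C_2$, and use that the dual of the MDS code $C_2$ is MDS with parameters ${[n,\,2i+2,\,n-2i-1]}_q$. The only difference is cosmetic: the paper (apparently by a slip) cites Item~(b) of Theorem~\ref{A} for the inequality $d_f\ge d^{\perp}$, whereas you correctly cite Item~(c).
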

\begin{proof}
Consider the same construction and notation used in
Theorem~\ref{main}. We know that $V$ has parameters $(n, 2i, 2; 1,
d_f)_{q}$. Let us compute $d_f$. From Theorem~\ref{A} Item (b),
$d_f \geq d^{\perp}$. We know that the matrix $H$ is obtaining by
rearranging the rows of $H_{C_2}$ and the code $C_2^{\perp}$ is a
MDS code with parameters ${[n, 2i+2, n-2i-1]}_{q}$. Thus $d_{f}
\geq n-2i-1$ and $V$ has parameters $(n, 2i, 2; 1, d_{f} )_{q}$,
where $d_{f} \geq n-2i-1$.
\end{proof}

Theorem~\ref{mainI}, given in the sequence, is the second main
result of this section. More precisely, in such theorem, we
construct new families of (classical) MDS convolutional codes over
$F_{q}$ for all $q=p^{t}$, where $t\geq 2$ and $p$ is an odd prime
number. In order to prove it, we need the following well known
result:

\begin{lemma}\cite[Theorem 9, Chapter 11]{Macwilliams:1977})\label{nicecycloqary}
Suppose that $q=p^{t}$, where $t\geq 2$ is an integer and $p$ is
an odd prime number. Let $n=q+1$ and consider that
$a=\frac{n}{2}$. Then one has:
\begin{enumerate}

\item[ i)] The $q$-ary coset ${\mathcal{C}}_{a}$ has only one
element, that is, ${\mathcal{C}}_{a}=\{a\}$;

\item[ ii)] With exception of cosets ${\mathcal{C}}_{0}=\{0\}$ and
${\mathcal{C}}_{a}$, each one of the other $q$-ary cyclotomic
cosets is of the form ${\mathcal{C}}_{a-i}= \{ a-i, a+i\}$, where
$1\leq i\leq a-1$;

 \item[ iii)] The $q$-ary cosets ${\mathcal{C}}_{a-i}=\{ a-i, a+i \}$, where
$1\leq i\leq a-1$, are mutually disjoint and have two elements.

\end{enumerate}
\end{lemma}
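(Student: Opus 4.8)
The plan is to exploit the single arithmetic fact that drives everything: since $n=q+1$ we have $q\equiv -1\pmod n$, and hence $q^{2}\equiv 1\pmod n$. Thus multiplication by $q$ modulo $n$ is just negation, and for every integer $s$ the $q$-ary cyclotomic coset is $\mathcal{C}_{s}=\{\,s\bmod n,\ -s\bmod n\,\}$; in particular every coset has at most two elements. (Note that $\gcd(q,n)=1$ holds automatically, and that $n$ is even because $q$ is odd, so $a=\frac{n}{2}$ is a well-defined integer.)

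With this in hand, part i) is immediate: for $s=a=\frac{n}{2}$ one has $-a\equiv n-a=a\pmod n$, hence $\mathcal{C}_{a}=\{a\}$; likewise $\mathcal{C}_{0}=\{0\}$.

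For parts ii) and iii) I would take any $s$ with $1\le s\le n-1$ and $s\neq a$, and write it as $s=a-i$ with $i=a-s$ when $s<a$, or as $s=a+i$ with $i=s-a$ when $s>a$; in either case $1\le i\le a-1$, and since $n-a=a$ one gets $-s\equiv a+i$ (resp. $a-i$) $\pmod n$, so $\mathcal{C}_{s}=\{a-i,\,a+i\}$. These two elements are genuinely distinct, since $a-i\equiv a+i\pmod n$ would force $2i\equiv 0\pmod n$, which is impossible for $1\le i\le a-1$ (then $2\le 2i\le n-2$); this gives that each such coset has exactly two elements. Mutual disjointness follows from the same estimate: if $\{a-i,a+i\}=\{a-j,a+j\}$ with $1\le i,j\le a-1$, then either $i=j$, or $a-i\equiv a+j\pmod n$, i.e. $i+j\equiv 0\pmod n$, which is ruled out by $2\le i+j\le 2(a-1)=n-2$. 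A counting check then confirms there are no other cosets: $\mathcal{C}_{0}$, $\mathcal{C}_{a}$, and the $a-1$ two-element cosets $\mathcal{C}_{a-i}$ account for $1+1+2(a-1)=2a=n$ residues, i.e. all of $\mathbb{Z}/n\mathbb{Z}$.

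There is no real obstacle here: the whole statement is elementary once one observes $q\equiv -1\pmod n$. The only thing to be careful about is keeping track of the index ranges so that the collision congruences $2i\equiv 0$ and $i+j\equiv 0\pmod n$ are correctly excluded. (Alternatively, since this is \cite[Theorem 9, Chapter 11]{Macwilliams:1977}, one could simply cite it, but the argument above is self-contained.)
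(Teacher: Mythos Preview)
Your proof is correct and self-contained. The paper does not actually prove this lemma; it is stated as a citation of \cite[Theorem 9, Chapter 11]{Macwilliams:1977} and used as a known result. Your approach---reducing everything to the observation that $q\equiv -1\pmod n$ so that each $q$-ary coset is $\{s,-s\bmod n\}$, then handling the singleton cases $s=0,a$ and verifying distinctness and disjointness via the inequalities $2\le 2i\le n-2$ and $2\le i+j\le n-2$---is exactly the natural elementary argument and is what the cited reference essentially does as well. There is nothing to correct; if anything, your write-up supplies the details the paper omits.
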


Let us now prove Theorem~\ref{mainI}. Since its proof is analogous
to that of Theorem~\ref{main}, we only give a sketch of it.

\begin{theorem}\label{mainI}
Assume that $q=p^{t}$, where $t\geq 2$ is an integer and $p$ is an
odd prime number. Consider that $n=q+1$ and $a=\frac{n}{2}$. Then
there exist classical MDS convolutional codes with parameters $(n,
n-2i+1, 2; 1, 2i+2)_{q}$, where $2\leq i \leq a - 1$.
\end{theorem}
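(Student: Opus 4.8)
The plan is to mimic the two-step construction used in the proof of Theorem~\ref{main}, replacing the binary cyclotomic structure described in Lemma~\ref{nicecyclo} by the odd-characteristic structure of Lemma~\ref{nicecycloqary}. The key observation is that, since $a=\frac{n}{2}$ and $\mathcal{C}_a=\{a\}$ is a singleton while all cosets $\mathcal{C}_{a-i}=\{a-i,a+i\}$ for $1\le i\le a-1$ have two elements and are mutually disjoint, a defining set that uses the block $\mathcal{C}_a\cup\mathcal{C}_{a-1}\cup\cdots\cup\mathcal{C}_{a-(i-1)}$ gives a string of $2i-1$ consecutive integers $\{a-i+1,\dots,a,\dots,a+i-1\}$ and has exactly $2i-1$ elements. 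Likewise, adding the coset $\mathcal{C}_{a-i}$ yields $2i+1$ consecutive integers and $2i+1$ elements. This asymmetry (one singleton coset) is exactly what will make the resulting convolutional code have dimension $n-2i+1$ rather than $n-2i$.

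Concretely, first I would build three BCH block codes of length $n$ over $F_q$, all with $\mathrm{ord}_n(q)=2$ since $n=q+1\mid q^2-1$ but $n\nmid q-1$ (for $q$ odd, $q+1$ is even so we must check $n\nmid q-1$, which holds as $q+1>q-1$): the code $C_2=\langle M^{(a-i)}(x)M^{(a-i+1)}(x)\cdots M^{(a-1)}(x)M^{(a)}(x)\rangle$ with defining set of size $2i+1$, which is MDS with parameters $[n,n-2i-1,2i+2]_q$ by the BCH bound and the Singleton bound; the code $C_1=\langle M^{(a-i+1)}(x)\cdots M^{(a-1)}(x)M^{(a)}(x)\rangle$ with defining set of size $2i-1$, which is MDS with parameters $[n,n-2i+1,2i]_q$; and the code $C=\langle M^{(a-i)}(x)\rangle$ with parameters $[n,n-2,d\ge 2]_q$ (here $\mathcal{C}_{a-i}$ has two elements). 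Taking the parity check matrix $H_{2i+2,a-i}$ of $C_2$, expanding entries over an $F_q$-basis $\beta$ of $F_{q^2}$, and checking no rows become linearly dependent (the expanded matrix has $2i+1$ rows and $C_2$ has codimension $2i+1$), one then rearranges its rows so that the $\mathcal{C}_{a-i}$-rows sit at the bottom, splits $H$ into $H_0$ (the rows coming from $H_{C_1}$) and $H_1$ (the single $\mathcal{C}_{a-i}$-block, i.e.\ the expanded $H_C$), and forms $G(D)=\tilde H_0+\tilde H_1 D$. By Theorem~\ref{A}(a) this is a reduced basic generator matrix; the resulting code $V$ has dimension $\kappa=\mathrm{rk}\,H_0=2i-1$ and degree $\gamma=\mathrm{rk}\,H_1=2$ and memory $m=1$. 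Its Euclidean dual $V^\perp$ then has dimension $n-(2i-1)=n-2i+1$ and degree $2$.

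For the free distance of $V^\perp$ I would invoke Theorem~\ref{A}(c): $\min\{d_0+d_m,d\}\le d_f^\perp\le d$, where $d$ is the minimum distance of $C$ viewed through $H=H_{C_2}$-rearranged (so $d=2i+2$), $d_0$ is the minimum distance of the code with check matrix $\tilde H_0$ (which is $C_1$ up to row operations, so $d_0=2i$), and $d_m=d_1$ is the minimum distance of the code with check matrix $\tilde H_1$ (so $d_1\ge 2$). Then $\min\{2i+2,2i+2\}=2i+2\le d_f^\perp\le 2i+2$ forces $d_f^\perp=2i+2$, giving $V^\perp$ the claimed parameters $(n,n-2i+1,2;1,2i+2)_q$. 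Finally, plugging into the generalized Singleton bound $d_f\le (n-k)(\lfloor\gamma/k\rfloor+1)+\gamma+1$ with $k=n-2i+1$, $\gamma=2$: since $2\le i\le a-1$ forces $k=n-2i+1\ge 3>2=\gamma$, we get $\lfloor\gamma/k\rfloor=0$ and the bound reads $(2i-1)(0+1)+2+1=2i+2$, which is exactly $d_f^\perp$, so $V^\perp$ is MDS.

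I expect the main obstacle — and the one point deserving care rather than a bare ``analogous to Theorem~\ref{main}'' — to be the bookkeeping around the singleton coset $\mathcal{C}_a$: one must verify that expanding $H_{2i+2,a-i}$ over $F_{q^2}/F_q$ really produces $2i+1$ linearly independent rows (the row of $\mathcal{C}_a$ contributes effectively one independent row after expansion, not two, consistent with $\deg g_2=2i+1$), and that in the row-rearranged $H$ the split into $H_0$ of rank $2i-1$ and $H_1$ of rank $2$ is the one that makes $C_0$ coincide (up to row operations) with $C_1$ and hence have $d_0=2i$. Everything else — the $\mathrm{ord}_n(q)=2$ check, the BCH-bound lower bounds, the Singleton verification, and the appeal to Theorem~\ref{A} for the basic/reduced property — is routine and parallels Theorem~\ref{main} line for line.
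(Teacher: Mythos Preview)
Your proposal is correct and follows essentially the same route as the paper's own proof: the same three BCH codes $C_2$, $C_1$, $C$ are built, the parity check matrix of $C_2$ is expanded over $F_q$ and split into $H_0$ (rank $2i-1$, playing the role of $H_{C_1}$) and $H_1$ (rank $2$, playing the role of $H_C$), and Theorem~\ref{A}(a),(c) together with the generalized Singleton bound finish the argument. Your explicit flagging of the singleton coset $\mathcal{C}_a$ (forcing the expanded $H_{C_2}$ to have rank $2i+1$ rather than $2i+2$, with one dependent row to discard) is exactly the point the paper handles, and your verification that $\lfloor\gamma/k\rfloor=0$ under the hypothesis $2\le i\le a-1$ is a detail the paper leaves implicit.
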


\begin{proof}
Let $C_2$ be the BCH code of length $n$ over $F_{q}$ generated by
the product of the minimal polynomials
\begin{eqnarray*}
C_2 = \langle g_{2} (x)  \rangle =\langle {M}^{(a-i)}(x)
{M}^{(a-i+1)}(x)\cdot \\ \ldots \cdot {M}^{(a-1)}(x){M}^{(a)}(x)
\rangle.
\end{eqnarray*}
whose parity check matrix $H_{C_2}$ is obtained from the matrix

\begin{eqnarray*}
H_{2i+2 , a-i} = \\ = \left[
\begin{array}{ccccc}
1 & {{\alpha}^{(a-i)}} & {{ \alpha}^{2(a-i)}} & \cdots & {{\alpha}^{(n-1)(a-i)}} \\
1 & {{\alpha}^{(a-i+1)}} & {{\alpha}^{2(a-i+1)}} & \cdots & {{\alpha}^{(n-1)(a-i+1)}}\\
\vdots & \vdots & \vdots & \vdots & \vdots\\
1 & {{\alpha}^{(a-1)}} & \cdots & \cdots & {{\alpha}^{(n-1)(a-1)}}\\
1 & {{\alpha}^{a}} & \cdots & \cdots & {{\alpha}^{(n-1)a}}\\
\end{array}
\right]
\end{eqnarray*}
by expanding each entry as a column vector over some $F_{q}-$basis
$\beta$ of $F_{q^2}$ and removing one linearly dependent row,
because $H_{C_2}$ has rank $2i+1$ (computed below).

From Lemma~\ref{nicecycloqary}, each one of the $q$-ary cyclotomic
cosets ${\mathcal{C}}_{a-i}$, where $2\leq i\leq a-1$, has two
elements, they are mutually disjoint and the coset
${\mathcal{C}}_{a}$ has only one element. Thus the dimension
$k_{C_{2}}$ of $C_{2}$ equals $k_{C_{2}}= n -\deg (g_{2} (x))= n -
2i-1$. Moreover, since the defining set of the code $C_{2}$
consists of the sequence $\{ a-i, a-i+1, \ldots, a, a+1, \ldots,
a+i\}$ of $2i+1$ consecutive integers then the minimum distance
$d_{C_{2}}$ of $C_{2}$ satisfies $d_{C_{2}} \geq 2i+2$. Hence,
$C_2$ is a MDS code with parameters ${[n, n-2i-1, 2i+2]}_{q}$.

We next consider $C_1$ as the BCH code of length $n$ over $F_{q}$
generated by the product of the minimal polynomials
\begin{eqnarray*}
C_1 =\langle g_1 (x)\rangle = \langle {M}^{(a-i+1)}(x)
{M}^{(a-i+2)}(x)\cdot \\ \ldots \cdot
{M}^{(a-1)}(x){M}^{(a)}(x)\rangle.
\end{eqnarray*}
whose parity check matrix $H_{C_1}$ is derived from the matrix

\begin{eqnarray*}
H_{2i , a-i+1} = \\ = \left[
\begin{array}{ccccc}
1 & {{\alpha}^{(a-i+1)}} & {{\alpha}^{2(a-i+1)}} & \cdots & {{\alpha}^{(n-1)(a-i+1)}} \\
1 & {{\alpha}^{(a-i+2)}} & {{\alpha}^{2(a-i+2)}} & \cdots & {{\alpha}^{(n-1)(a-i+2)}}\\
\vdots & \vdots & \vdots & \vdots & \vdots\\
1 & {{\alpha}^{(a-1)}} & \cdots & \cdots & {{\alpha}^{(n-1)(a-1)}}\\
1 & {{\alpha}^{a}} & \cdots & \cdots & {{\alpha}^{(n-1)a}}\\
\end{array}
\right]
\end{eqnarray*}
by expanding each entry as a column vector with respect to $\beta$
of $F_{q^2}$. Then it follows that $C_1$ is a MDS code with
parameters ${[n, n-2i+1, 2i]}_{q}$ and $H_{C_1}$ has rank $2i-1$.

Assume that $C$ is the BCH code generated by the minimal
polynomial ${M}^{(a-i)}(x)$. Then $C$ has parameters ${[n, n-2,
d\geq 2]}_{q}$. A parity check matrix $H_{C}$ of $C$ is given by
expanding each entry of the matrix
\begin{eqnarray*}
H_{2, a-i} = \\ = \left[
\begin{array}{ccccc}
1 & {{\alpha}^{(a-i)}} & {{\alpha}^{2(a-i)}} & \cdots & {{\alpha}^{(n-1)(a-i)}}\\
\end{array}
\right]
\end{eqnarray*}
with respect to $\beta$. $H_{C}$ has rank $2$.

Rearranging the rows of $H_{C_2}$ we obtain the matrix
\begin{eqnarray*}
H = \\ = \left[
\begin{array}{ccccc}
1 & {{\alpha}^{a}} & \cdots & \cdots & {{\alpha}^{(n-1)a}}\\
1 & {{\alpha}^{(a-1)}} & \cdots & \cdots & {{\alpha}^{(n-1)(a-1)}}\\
\vdots & \vdots & \vdots & \vdots & \vdots\\
1 & {{\alpha}^{(a-i+1)}} & {{\alpha}^{2(a-i+1)}} & \cdots & {{\alpha}^{(n-1)(a-i+1)}}\\
1 & {{\alpha}^{(a-i)}} & {{\alpha}^{2(a-i)}} & \cdots & {{ \alpha}^{(n-1)(a-i)}} \\
\end{array}
\right],
\end{eqnarray*}
where $a=\frac{n}{2}$. Next we split $H$ into two disjoint
submatrices $H_0$ and $H_1$ (as in Theorem~\ref{main}) of the form

\begin{eqnarray*}
H_0 = \\ = \left[
\begin{array}{ccccc}
1 & {{\alpha}^{a}} & \cdots & \cdots & {{\alpha}^{(n-1)a}}\\
1 & {{\alpha}^{(a-1)}} & \cdots & \cdots & {{\alpha}^{(n-1)(a-1)}}\\
\vdots & \vdots & \vdots & \vdots & \vdots\\
1 & {{\alpha}^{(a-i+1)}} & {{\alpha}^{2(a-i+1)}} & \cdots & {{\alpha}^{(n-1)(a-i+1)}}\\
\end{array}
\right]
\end{eqnarray*}
and
\begin{eqnarray*}
H_1 = \\ = \left[
\begin{array}{ccccc}
1 & {{\alpha}^{(a-i)}} & {{\alpha}^{2(a-i)}} & \cdots & {{\alpha}^{(n-1)(a-i)}}\\
\end{array}
\right],
\end{eqnarray*}
obtaining, in this way, the convolutional code $V$ generated by
the matrix
\begin{eqnarray*}
G(D)=\tilde H_{0}+ \tilde H_1 D
\end{eqnarray*}
with parameters $(n, 2i-1, 2; 1, d_{f})_{q}$. Proceeding similarly
as in Theorem~\ref{main}, one has a MDS convolutional code
$V^{\perp}$ with parameters $(n, n-2i+1, 2; 1, 2i+2)_{q}$, for all
$2\leq i \leq a - 1$.
\end{proof}

In the next result, we construct memory-two convolutional codes:

\begin{theorem}\label{mainII}
Assume that $q=p^{t}$, where $t\geq 2$ is an integer and $p$ is an
odd prime number. Consider that $n=q+1$ and $a=\frac{n}{2}$. Then
there exist convolutional codes with parameters $(n, 2i-3, 4; 2,
d_{f}\geq n-2i)_{q}$, where $3\leq i \leq a - 1$.
\end{theorem}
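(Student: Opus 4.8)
The plan is to follow exactly the two-step template used in Theorem~\ref{main} and Theorem~\ref{mainI}, but now splitting the parity check matrix into three blocks $H_0,H_1,H_2$ so that the resulting convolutional code has memory $m=2$. First I would invoke Lemma~\ref{nicecycloqary} (with $q=p^t$, $p$ odd, $n=q+1$, $a=n/2$) to control the cyclotomic cosets: $\mathcal{C}_a=\{a\}$ is a singleton and each $\mathcal{C}_{a-i}=\{a-i,a+i\}$ for $1\le i\le a-1$ has two elements and they are mutually disjoint. Then I would build a chain of BCH block codes whose defining sets are nested intervals of consecutive integers centered at $a$: let $C_2=\langle M^{(a-i)}(x)\cdots M^{(a)}(x)\rangle$ with defining set $\{a-i,\dots,a+i\}$, giving (by the coset count and the BCH bound) an MDS block code with parameters $[n,\,n-2i-1,\,2i+2]_q$; let $C_1=\langle M^{(a-i+2)}(x)\cdots M^{(a)}(x)\rangle$, an MDS code $[n,\,n-2i+3,\,2i-2]_q$; and let $C$ (the ``small'' code) be generated by the two outermost minimal polynomials $M^{(a-i)}(x)M^{(a-i+1)}(x)$ (defining set $\{a-i,a+i,a-i-1,a+i+1\}$ — four elements since the cosets are distinct), so $C$ has parameters $[n,\,n-4,\,d\ge 3]_q$, with the BCH bound giving $d\ge 3$ because $\{a-i-1,a-i\}$ (or symmetrically two consecutive powers among the zeros) are consecutive. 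The exact choice of which minimal polynomials go into $C$ versus $C_1$ is the bookkeeping I would have to pin down so that the dimensions work out to $n-2i+3$ for $C_1$ and so that $V$ ends up with dimension $2i-3$.

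Next comes the second step: rearrange the rows of $H_{C_2}$ (the expanded parity check matrix, $2$ rows per power of $\alpha$, minus the one dependent row coming from the singleton coset $\mathcal{C}_a$) so that the rows coming from $C_1$ sit on top, then split $H$ into three disjoint submatrices $H_0$ (the rows corresponding to $H_{C_1}$), $H_1$, and $H_2$, where $H_1\cup H_2$ together make up the extra rows of $H_{C_2}$ not in $H_{C_1}$, i.e. the rows coming from the two outermost minimal polynomials. I would arrange things so that $H_2$ is a single block (a $C$-type check) and $H_1$ another, each of rank $\le\mathrm{rk}\,H_0$, so that $\kappa=\mathrm{rk}\,H_0$. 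By Theorem~\ref{A}(a) the matrix $G(D)=\tilde H_0+\tilde H_1 D+\tilde H_2 D^2$ is a reduced basic generator matrix; by construction the code $V$ it generates has memory $m=2$, dimension $2i-3$, and degree $\gamma=\gamma_V=4$ (each of the two outermost pairs of zero-rows contributes to the constraint length in the way that forces $\gamma=4$, exactly as the Remark after Theorem~\ref{main} explains for $\gamma=2$). Then I pass to the dual $V^{\perp}$: it has length $n$, dimension $n-(2i-3)=n-2i+3$, and degree $4$. Wait — I should double-check the claimed parameters $(n,2i-3,4;2,d_f\ge n-2i)_q$: the statement is about $V$ itself, not $V^{\perp}$, so the free-distance bound I actually need is the one from Theorem~\ref{A}(c), $d_f\ge d^{\perp}$, where $d^{\perp}$ is the minimum distance of $C_2^{\perp}$. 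Since $C_2$ is MDS with parameters $[n,n-2i-1,2i+2]_q$, its dual $C_2^{\perp}$ is MDS with parameters $[n,2i+1,n-2i]_q$, whence $d_f\ge n-2i$, which is exactly the asserted bound.

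So the final assembly is: $V=\{{\bf u}(D)G(D):{\bf u}(D)\in F_q^{n-2i+3}[D]\}$ — actually the free module of the appropriate rank — is an $(n,2i-3,4;2,d_f)_q$ code (length from the $n$ columns, dimension $2i-3$ from the rank of $H_0$ after the splitting, degree $4$ and memory $2$ by construction), and the free-distance lower bound $d_f\ge n-2i$ follows from Theorem~\ref{A}(c) together with the MDS-ness of $C_2^{\perp}$ established via the BCH bound and the coset structure of Lemma~\ref{nicecycloqary}. I would close by noting the range constraint $3\le i\le a-1$ is precisely what is needed for all three defining sets to be valid nested intervals inside $\{1,\dots,n-1\}$ with $C$ nontrivial and $V$ of positive dimension.

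The main obstacle I expect is not any single calculation but the combinatorial bookkeeping of the three-way split: one must choose the partition $H_0\mid H_1\mid H_2$ so that simultaneously (i) $\mathrm{rk}\,H_0=\kappa\ge\mathrm{rk}\,H_1,\mathrm{rk}\,H_2$ (needed to apply Theorem~\ref{A}), (ii) the degree of the resulting code is exactly $4$ and the memory exactly $2$ (this is what dictates putting one zero-power-heavy block into $D^2$ and forces the outermost minimal polynomials to be the ``new'' rows of $C_2$ relative to $C_1$), and (iii) the dimension of $V$ comes out to $2i-3$ rather than some neighboring value. Verifying that the singleton coset $\mathcal{C}_a$ accounts for exactly one dependent row in the expansion of $H_{C_2}$ — as already used in Theorem~\ref{mainI} — is the one place where the odd-characteristic hypothesis is essential, and I would make sure that point is carried through cleanly.
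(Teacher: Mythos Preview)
Your approach matches the paper's: nested BCH codes via Lemma~\ref{nicecycloqary}, a three-block split of the parity check of the big MDS code $[n,n-2i-1,2i+2]_q$, and Theorem~\ref{A}(c) giving $d_f\ge d^{\perp}=n-2i$ from the MDS dual. The only bookkeeping to tighten is the small-code part: the paper takes two \emph{separate} single-polynomial codes $\langle M^{(a-i+1)}(x)\rangle$ and $\langle M^{(a-i)}(x)\rangle$, each furnishing a rank-$2$ block for $H_1$ and $H_2$ respectively (rather than one combined $[n,n-4]$ code that you then re-split), and the coset attached to $M^{(a-i+1)}$ is $\{a-i+1,\,a+i-1\}$, not $\{a-i-1,\,a+i+1\}$.
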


\begin{proof}
Let $C_3$ be the BCH code of length $n$ over $F_{q}$ generated by
the product of the minimal polynomials
\begin{eqnarray*}
C_3 = \langle g_3 (x)  \rangle =\langle {M}^{(a-i)}(x)
{M}^{(a-i+1)}(x) {M}^{(a-i+2)}(x)\cdot\\ \ldots\cdot
{M}^{(a-1)}(x){M}^{(a)}(x) \rangle.
\end{eqnarray*}
whose parity check matrix $H_{C_3}$ is obtained from the matrix

\begin{eqnarray*}
H_{2i+2 , a-i} = \\ = \left[
\begin{array}{ccccc}
1 & {{\alpha}^{(a-i)}} & {{ \alpha}^{2(a-i)}} & \cdots & {{\alpha}^{(n-1)(a-i)}} \\
1 & {{\alpha}^{(a-i+1)}} & {{\alpha}^{2(a-i+1)}} & \cdots & {{\alpha}^{(n-1)(a-i+1)}}\\
1 & {{\alpha}^{(a-i+2)}} & {{\alpha}^{2(a-i+2)}} & \cdots & {{\alpha}^{(n-1)(a-i+2)}}\\
\vdots & \vdots & \vdots & \vdots & \vdots\\
1 & {{\alpha}^{(a-1)}} & \cdots & \cdots & {{\alpha}^{(n-1)(a-1)}}\\
1 & {{\alpha}^{a}} & \cdots & \cdots & {{\alpha}^{(n-1)a}}\\
\end{array}
\right]
\end{eqnarray*}
by expanding each entry as a column vector over some $F_{q}-$basis
$\beta$ of $F_{q^2}$. We know that $C_3$ is a MDS code with
parameters ${[n, n-2i-1, 2i+2]}_{q}$ and $H_{C_3}$ has rank
$2i+1$.

We next consider $C_2$ as the BCH code of length $n$ over $F_{q}$
generated by the product of the minimal polynomials
\begin{eqnarray*}
C_2 =\langle g_2 (x)\rangle =\\= \langle {M}^{(a-i+2)}(x)\cdot
\ldots \cdot {M}^{(a-1)}(x){M}^{(a)}(x)\rangle.
\end{eqnarray*}
whose parity check matrix $H_{C_2}$ is derived from the matrix
\begin{eqnarray*}
H_{2i-2, a-i+2} = \\ = \left[
\begin{array}{ccccc}
1 & {{\alpha}^{(a-i+2)}} & {{\alpha}^{2(a-i+2)}} & \cdots & {{\alpha}^{(n-1)(a-i+2)}}\\
\vdots & \vdots & \vdots & \vdots & \vdots\\
1 & {{\alpha}^{(a-1)}} & \cdots & \cdots & {{\alpha}^{(n-1)(a-1)}}\\
1 & {{\alpha}^{a}} & \cdots & \cdots & {{\alpha}^{(n-1)a}}\\
\end{array}
\right]
\end{eqnarray*}
by expanding each entry as a column vector with respect to $\beta$
of $F_{q^2}$. Then it follows that $C_2$ is a code with parameters
${[n, n-2i+3, 2i-2]}_{q}$.

Let $C_1$ be the BCH code of length $n$ over $F_{q}$ generated by
${M}^{(a-i+1)}(x)$ whose parity check matrix $H_{C_1}$ is given by
expanding each entry of the matrix
\begin{eqnarray*}
H_{2, a-i+1} = \\ = \left[
\begin{array}{ccccc}
1 & {{\alpha}^{(a-i+1)}} & {{\alpha}^{2(a-i+1)}} & \cdots & {{\alpha}^{(n-1)(a-i+1)}}\\
\end{array}
\right]
\end{eqnarray*}
with respect to $\beta$, and assume that $C$ is the BCH code
generated by the minimal polynomial ${M}^{(a-i)}(x)$ with parity
check matrix $H_{C}$ given by expanding each entry of the matrix
\begin{eqnarray*}
H_{2, a-i} = \\ = \left[
\begin{array}{ccccc}
1 & {{\alpha}^{(a-i)}} & {{\alpha}^{2(a-i)}} & \cdots & {{\alpha}^{(n-1)(a-i)}}\\
\end{array}
\right]
\end{eqnarray*}
with respect to $\beta$. We know that $C_1$ and $C$ has parameters
${[n, n-2, d\geq 2]}_{q}$.

Rearranging the rows of $H_{C_3}$ we obtain the matrix
\begin{eqnarray*}
H = \\ = \left[
\begin{array}{ccccc}
1 & {{\alpha}^{a}} & \cdots & \cdots & {{\alpha}^{(n-1)a}}\\
1 & {{\alpha}^{(a-1)}} & \cdots & \cdots & {{\alpha}^{(n-1)(a-1)}}\\
\vdots & \vdots & \vdots & \vdots & \vdots\\
1 & {{\alpha}^{(a-i+2)}} & {{\alpha}^{2(a-i+2)}} & \cdots & {{\alpha}^{(n-1)(a-i+2)}}\\
1 & {{\alpha}^{(a-i+1)}} & {{\alpha}^{2(a-i+1)}} & \cdots & {{\alpha}^{(n-1)(a-i+1)}}\\
1 & {{\alpha}^{(a-i)}} & {{\alpha}^{2(a-i)}} & \cdots & {{ \alpha}^{(n-1)(a-i)}} \\
\end{array}
\right].
\end{eqnarray*}
Next we split $H$ into three disjoint submatrices $H_0$ and $H_1$
and $H_2$ (as in Theorem~\ref{main}) of the form

\begin{eqnarray*}
H_0 = \\ = \left[
\begin{array}{ccccc}
1 & {{\alpha}^{a}} & \cdots & \cdots & {{\alpha}^{(n-1)a}}\\
1 & {{\alpha}^{(a-1)}} & \cdots & \cdots & {{\alpha}^{(n-1)(a-1)}}\\
\vdots & \vdots & \vdots & \vdots & \vdots\\
1 & {{\alpha}^{(a-i+2)}} & {{\alpha}^{2(a-i+2)}} & \cdots & {{\alpha}^{(n-1)(a-i+2)}}\\
\end{array}
\right],
\end{eqnarray*}

\begin{eqnarray*}
H_1 = \\ = \left[
\begin{array}{ccccc}
1 & {{\alpha}^{(a-i+1)}} & {{\alpha}^{2(a-i+1)}} & \cdots & {{\alpha}^{(n-1)(a-i+1)}}\\
\end{array}
\right],
\end{eqnarray*}
and
\begin{eqnarray*}
H_2 = \\ = \left[
\begin{array}{ccccc}
1 & {{\alpha}^{(a-i)}} & {{\alpha}^{2(a-i)}} & \cdots & {{\alpha}^{(n-1)(a-i)}}\\
\end{array}
\right],
\end{eqnarray*}
obtaining, in this way, a memory-two convolutional code $V$
generated by the matrix
\begin{eqnarray*}
G(D)=\tilde H_{0}+ \tilde H_1 D + \tilde H_2 D^{2}
\end{eqnarray*}
with parameters $(n, 2i-3, 4; 2, d_{f})_{q}$, where, from Item (c)
of Theorem~\ref{A}, one concludes that $d_{f}\geq d^{\perp}=n-2i$.
The proof is complete.
\end{proof}

Theorem~\ref{mainII} can be easily generalized as one can see in
the next result:


\begin{theorem}\label{mainIII}
Assume that $q=p^{t}$, where $t\geq 2$ is an integer and $p$ is an
odd prime number. Consider that $n=q+1$, $a=\frac{n}{2}$ and let
$r, m$ integers with $r \geq 1$, $m \geq 2$ such that $3 \leq r +
m \leq a - 1$. Then there exist convolutional codes with
parameters $(n, 2r + 1, 2m; m, d_{f}\geq n-2[r+m])_{q}$.
\end{theorem}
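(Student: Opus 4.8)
The plan is to mimic the proof of Theorem~\ref{mainII} verbatim, simply replacing the ``window'' of three consecutive defining-set elements by a window of $m+1$ consecutive ones and the two single-element ``tail'' cosets by $m$ single-element cosets. First I would set up the block codes. Let $C_{m+1}$ be the BCH code of length $n=q+1$ over $F_q$ whose generator polynomial is the product of the minimal polynomials ${M}^{(a-r-m+1)}(x){M}^{(a-r-m+2)}(x)\cdots {M}^{(a-1)}(x){M}^{(a)}(x)$; by Lemma~\ref{nicecycloqary} (i)--(iii) the cosets $\mathcal{C}_{a-i}$ for $1\le i\le a-1$ are pairwise disjoint of size two while $\mathcal{C}_a=\{a\}$, so the defining set of $C_{m+1}$ is the run of consecutive integers $\{a-r-m+1,\dots,a,a+1,\dots,a+r+m-1\}$, hence $C_{m+1}$ is an ${[n,\,n-2(r+m)+1,\,2(r+m)]}_q$ MDS code and its parity-check matrix $H_{C_{m+1}}$ (obtained by expanding the $2(r+m)$-row $\alpha$-matrix over an $F_q$-basis $\beta$ of $F_{q^2}$ and deleting one dependent row) has rank $2(r+m)-1$. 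Then for $j=1,\dots,m$ let $C_j$ be the one-coset BCH code generated by ${M}^{(a-r-j+1)}(x)$, each with parameters ${[n,n-2,d\ge 2]}_q$ and a rank-$2$ parity check matrix which is a submatrix of $H_{C_{m+1}}$.

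Next I would carry out the convolutional construction exactly as in Theorem~\ref{mainII}. Rearrange the rows of $H_{C_{m+1}}$ so that the rows corresponding to $\alpha^a,\alpha^{a-1},\dots,\alpha^{a-r-m+2}$ come first (these $2r+1$ rows form $H_0$, a row-rearrangement of the parity-check matrix of the inner BCH code with defining set $\{a-r-m+2,\dots,a+r+m-2\}$), and then append the $m$ single rows for $\alpha^{a-r-m+1},\alpha^{a-r-m+2}$ wait --- more carefully, the last $m$ rows of the rearranged $H$ are the single-element blocks $\alpha^{a-r-m+1},\dots,\alpha^{a-r}$ arranged so that successive memory coefficients peel off one row at a time; these become $H_1,\dots,H_m$, each a single row of full rank $1\le\mathrm{rk}\,H_0$. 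One then forms $G(D)=\tilde H_0+\tilde H_1 D+\cdots+\tilde H_m D^m$ where each $\tilde H_i$ is $H_i$ padded with zero rows to have $\mathrm{rk}\,H_0=2r+1$ rows; since $\mathrm{rk}\,H_0\ge\mathrm{rk}\,H_i$ for all $i$, Theorem~\ref{A}(a) says $G(D)$ is a reduced basic matrix. By construction the resulting code $V$ has length $n$, dimension $2r+1$, memory $m$, and degree $\gamma$ equal to the number of rows having nonzero entries in the $D^m$ term plus intermediate contributions --- which by the telescoping row structure totals $2m$ --- so $V$ has parameters $(n,2r+1,2m;m,d_f)_q$.

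Finally, for the free-distance bound I would invoke Theorem~\ref{A}(c): $d_f\ge d^{\perp}$, where $d^{\perp}$ is the minimum distance of $C_{m+1}^{\perp}$. Since $C_{m+1}$ is MDS with parameters ${[n,n-2(r+m)+1,2(r+m)]}_q$, its Euclidean dual is also MDS with parameters ${[n,2(r+m)-1,n-2(r+m)+2]}_q$; hmm, actually since $H$ is merely a row-rearrangement of $H_{C_{m+1}}$ the code it checks is $C_{m+1}$ itself, so $d^{\perp}=n-2(r+m)+2$. To match the claimed bound $d_f\ge n-2[r+m]$ it suffices to note $n-2(r+m)+2\ge n-2(r+m)$, so indeed $d_f\ge n-2[r+m]$, completing the proof.

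The main obstacle I anticipate is not the block-code arithmetic (which is immediate from Lemma~\ref{nicecycloqary}) but verifying carefully that the degree of $V$ is exactly $2m$: this requires checking that after the row rearrangement the submatrices $H_1,\dots,H_m$ sit in pairwise distinct row-positions of $G(D)$ so that the constraint lengths $\gamma_i$ sum to $2m$ rather than collapsing, i.e.\ that the ``staircase'' placement of the single rows $\alpha^{a-r-m+1},\dots,\alpha^{a-r}$ across the memory levels is consistent with $G(D)$ being reduced basic. This is exactly the bookkeeping already done implicitly in Theorem~\ref{mainII} for $m=2$, $r=i-2$, and the general case is the same pattern, so I would state it by analogy and not belabor the indices.
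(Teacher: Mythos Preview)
Your overall strategy is the paper's: split the parity-check matrix of a large BCH code into an ``inner'' block $H_0$ and $m$ single-coset tails $H_1,\ldots,H_m$, then invoke Theorem~\ref{A}. But two concrete bookkeeping errors prevent your write-up from yielding the stated parameters.

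First, the indices are off by one throughout. The paper's large code uses the $r+m+1$ cosets $\mathcal{C}_{a-(r+m)},\ldots,\mathcal{C}_a$, so it is $[n,\,n-2(r+m)-1,\,2(r+m)+2]_q$ and its dual has minimum distance exactly $n-2(r+m)$. The inner block $H_0$ then comes from the $r+1$ cosets $\mathcal{C}_{a-r},\ldots,\mathcal{C}_a$, which after $F_q$-expansion has rank $2r+1$, and the $m$ tails sit at $\mathcal{C}_{a-r-1},\ldots,\mathcal{C}_{a-r-m}$. With your choice (big code from $a-r-m+1$ to $a$, tails $C_1,\ldots,C_m$ at $a-r,\ldots,a-r-m+1$) the cosets left over for $H_0$ are only $\mathcal{C}_{a-r+1},\ldots,\mathcal{C}_a$, giving $\operatorname{rk}H_0=2r-1$ and hence a convolutional code of dimension $2r-1$, not $2r+1$.

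Second, your degree argument rests on the wrong picture. After expansion over an $F_q$-basis of $F_{q^2}$ each tail $H_i$ has \emph{two} rows (since $|\mathcal{C}_{a-r-i}|=2$), not one. In the construction of Section~\ref{IIIA} the padding ``add zero-rows at the bottom'' puts the two nonzero rows of every $\tilde H_i$ in the \emph{same} positions (rows $1$ and $2$); there is no staircase. Hence rows $1$ and $2$ of $G(D)$ each have constraint length $m$ while rows $3,\ldots,2r+1$ have constraint length $0$, and $\gamma=m+m=2m$. The ``pairwise distinct row-positions'' you propose would give degree $m(m+1)$ (or $m(m+1)/2$ if the tails really were single rows), not $2m$, and would moreover require $2m\le 2r+1$, which the hypotheses do not ensure. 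Once these two points are corrected your argument coincides with the paper's.
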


\begin{proof}
Let $C$ be the BCH code of length $n$ over $F_{q}$ generated by
the product of the minimal polynomials
\begin{eqnarray*}
C = \langle g (x)  \rangle =\langle {M}^{(a-[r+m])}(x)\cdot
\ldots\cdot {M}^{(a-[r+1])}(x)\cdot\\ \cdot
{M}^{(a-r)}(x)\cdot\ldots\cdot {M}^{(a-1)}(x){M}^{(a)}(x) \rangle.
\end{eqnarray*}
whose parity check matrix $H_{C}$ is obtained from the matrix

\begin{eqnarray*}
H_{2[r+m]+2 , a-[r+m]} = \\ = \left[
\begin{array}{ccccc}
1 & {{\alpha}^{(a-[r+m])}} & {{ \alpha}^{2(a-[r+m])}} & \cdots & {{\alpha}^{(n-1)(a-[r+m])}} \\
\vdots & \vdots & \vdots & \vdots & \vdots\\
1 & {{\alpha}^{(a-[r+1])}} & {{\alpha}^{2(a-[r+1])}} & \cdots & {{\alpha}^{(n-1)(a-[r+1])}}\\
1 & {{\alpha}^{(a-r)}} & {{\alpha}^{2(a-r)}} & \cdots & {{\alpha}^{(n-1)(a-r)}}\\
\vdots & \vdots & \vdots & \vdots & \vdots\\
1 & {{\alpha}^{(a-1)}} & \cdots & \cdots & {{\alpha}^{(n-1)(a-1)}}\\
1 & {{\alpha}^{a}} & \cdots & \cdots & {{\alpha}^{(n-1)a}}\\
\end{array}
\right]
\end{eqnarray*}
by expanding each entry as a column vector over some $F_{q}-$basis
$\beta$ of $F_{q^2}$. We know that $C$ is a MDS code with
parameters ${[n, n-2[r+m]-1, 2[r+m]+2]}_{q}$

We next consider $C_0$ as the BCH code of length $n$ over $F_{q}$
generated by the product of the minimal polynomials
\begin{eqnarray*}
C_0 =\langle g_0 (x)\rangle = \langle {M}^{(a-r)}(x)\cdot \ldots
\cdot {M}^{(a-1)}(x){M}^{(a)}(x)\rangle.
\end{eqnarray*}
whose parity check matrix $H_{C_0}$ is derived from the matrix
\begin{eqnarray*}
H_{2r+2 , a-r} = \\ = \left[
\begin{array}{ccccc}
1 & {{\alpha}^{(a-r)}} & {{\alpha}^{2(a-r)}} & \cdots & {{\alpha}^{(n-1)(a-r)}}\\
\vdots & \vdots & \vdots & \vdots & \vdots\\
1 & {{\alpha}^{(a-1)}} & \cdots & \cdots & {{\alpha}^{(n-1)(a-1)}}\\
1 & {{\alpha}^{a}} & \cdots & \cdots & {{\alpha}^{(n-1)a}}\\
\end{array}
\right]
\end{eqnarray*}
by expanding each entry as a column vector with respect to $\beta$
of $F_{q^2}$. We know that $C_0$ is a MDS code with parameters
${[n, n-2r-1, 2r + 2]}_{q}$.

Let $C_{i}$ for all $1 \leq i \leq m$, be the BCH code of length
$n$ over $F_{q}$ generated by ${M}^{(a-[r+i])}(x)$ whose parity
check matrix $H_{C_{i}}$ is given by expanding each entry of the
matrix
\begin{eqnarray*}
H_{2, a-[r+i]} = \\ = \left[
\begin{array}{ccccc}
1 & {{\alpha}^{(a-[r+i])}} & {{\alpha}^{2(a-[r+i])}} & \cdots & {{\alpha}^{(n-1)(a-[r+i])}}\\
\end{array}
\right]
\end{eqnarray*}
with respect to $\beta$. We know that $C_{i}$ has parameters ${[n,
n-2, d\geq 2]}_{q}$.

Proceeding similarly as in the proof of Theorem~\ref{mainII}, one
obtains a convolutional code $V$ generated by the matrix
\begin{eqnarray*}
G(D)=\tilde H_{0}+ \tilde H_1 D + \tilde H_2 D^{2} + \cdots +
\tilde H_{m} D^{m}
\end{eqnarray*}
with parameters $(n, 2r+1, 2m; m, d_{f})_{q}$, where $d_{f}\geq
n-2[r+m]$.
\end{proof}

\begin{remark}
It is important to observe that the procedure adopted in
Theorem~\ref{mainIII} has several variants and, therefore, several
more new families can be constructed straightforwardly based on
our method.
\end{remark}

\begin{remark}
Unfortunately if one considers $m > 1$, there is no guarantee that
the corresponding convolutional codes are MDS.
\end{remark}

\section{New Quantum MDS-Convolutional codes}\label{V}

As in the classical case, the construction of MDS quantum
convolutional codes is a difficult task. This task is performed in
\cite{Grassl:2005,Grassl:2007,Klapp:2007,Forney:2007} but only in
\cite{Grassl:2005,Klapp:2007} the constructions are made
algebraically. Based on this view point, we propose the
construction of more MDS convolutional stabilizer codes.

It is well known that convolutional stabilizer codes can be
constructed from classical convolutional codes ( see for example
\cite[Proposition 1 and 2]{Aly:2007}). In the first construction,
one utilizes convolutional codes endowed with the Euclidean inner
product and in the second one, the codes are endowed with the
Hermitian inner product. Considering the $q$-ary cosets modulo
$n=q+1$ as given in the previous section, it is easy to see that
the dual-containing property with respect to the Euclidean inner
product does not hold for (classical) convolutional codes derived
from block codes with defining set of this type. However, when
considering cyclic codes endowed with the Hermitian inner product
one can show the existence of convolutional codes, derived from
them, which are (Hermitian) self-orthogonal (see
Lemma~\ref{cycloH}). This fact permits the construction of MDS
quantum convolutional codes (in the sense that they attain the
generalized quantum Singleton bound (Theorem~\ref{SingC}) as it is
shown in Theorem~\ref{main1}, given in the following. More
precisely, we utilize the MDS-convolutional codes constructed in
the previous section for constructing quantum MDS convolutional
codes. Before proceeding further, we need the following result:

\begin{lemma}\label{cycloH}
Assume $q=2^t$, where $t$ is an integer such that $t\geq 1$,
$n=q^2+1$ and let $a=\frac{q^2}{2}$. If $C$ is the cyclic code
whose defining set $Z$ is given by $Z = {\mathcal{C}}_{a-i} \cup
\ldots \cup {\mathcal{C}}_{a}$, where $0\leq i \leq \frac{q}{2} -
1$, then $C$ is Hermitian dual-containing.
\end{lemma}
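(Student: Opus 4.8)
The plan is to reduce the Hermitian dual-containing condition to a purely combinatorial statement about $q^2$-ary cyclotomic cosets modulo $n = q^2+1$, and then verify that statement using the explicit description of these cosets (which, since $q^2$ is an even power of $2$, is exactly the situation covered by Lemma~\ref{nicecyclo} with $q$ there replaced by $q^2$). Recall the standard criterion: a cyclic code $C$ of length $n$ over $F_{q^2}$ with defining set $Z$ satisfies $C^{\perp_h} \subseteq C$ if and only if $Z \cap (-qZ) = \emptyset$, where $-qZ = \{ -q z \bmod n : z \in Z\}$. So the first step is to record this criterion (it follows from the fact that the Hermitian dual of a cyclic code with defining set $Z$ is the cyclic code with defining set $-qZ$, together with the inclusion-reversing correspondence between defining sets and codes). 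After that, everything is a counting/arithmetic check on the set $Z = {\mathcal{C}}_{a-i} \cup \cdots \cup {\mathcal{C}}_a$.

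Next I would describe the cosets explicitly. Writing $N = q^2+1$ and $A = q^2/2 = a$, Lemma~\ref{nicecyclo} (applied with modulus $N$ and base field $F_{q^2}$, legitimate since $q^2 = 2^{2t}$ with $2t \geq 2$) tells us that apart from ${\mathcal{C}}_0 = \{0\}$ every $q^2$-ary coset has the form ${\mathcal{C}}_{a-j} = \{ a-j,\; a+j+1 \}$ for $0 \le j \le a-1$, and these are mutually disjoint. Hence
\begin{eqnarray*}
Z \;=\; \{\, a-i,\, a-i+1,\, \ldots,\, a,\, a+1,\, \ldots,\, a+i+1 \,\},
\end{eqnarray*}
a block of $2i+2$ consecutive integers mod $N$ centred (roughly) at $a+\tfrac12$. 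The task is to show $Z \cap (-q Z) = \emptyset$. Since $-q \equiv q^2 + 1 - q \pmod{N}$ and, more usefully, $q \cdot q \equiv -1 \pmod N$, multiplication by $-q$ sends $a = q^2/2$ to $-q\cdot q^2/2 \equiv (q/2)\cdot(-q^2) \equiv (q/2)\cdot 1 = q/2 \pmod N$ (using $-q^2 \equiv 1$), and more generally one computes the image of each endpoint of $Z$ under multiplication by $-q$. I would compute $-q(a+j+1)$ and $-q(a-j)$ explicitly modulo $N$ for $j$ ranging over $0,\dots,i$, obtaining that $-qZ$ is (a union of two) short arithmetic-type blocks sitting near $q/2$ and near $-q/2 \equiv N - q/2$, i.e. far from the block $Z$ which lives near $a = q^2/2$. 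The constraint $i \le q/2 - 1$ is exactly what guarantees these two regions do not wrap around and collide with $Z$; this is where that hypothesis gets used.

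The main obstacle — and the only part requiring genuine care — is the modular arithmetic showing $Z \cap (-qZ) = \emptyset$: one must track the images of the $2i+2$ elements of $Z$ under $x \mapsto -qx \bmod N$, control the two resulting intervals, and check that the bound $0 \le i \le q/2-1$ keeps them disjoint from $Z$ (no wrap-around, no accidental overlap, and in particular $0 \notin Z$ and $0 \notin -qZ$ so the self-reciprocal element causes no trouble). I would organize this by splitting $Z$ as ${\mathcal{C}}_{a-i}\cup\cdots\cup{\mathcal{C}}_a$, noting $-q{\mathcal{C}}_{a-j} = {\mathcal{C}}_{-q(a-j)}$, and showing each coset ${\mathcal{C}}_{-q(a-j)}$ for $0\le j \le i$ has its representative in the range $\{q/2 - i, \ldots, q/2\}$ (or its disjoint partner), which is disjoint from $\{a-i,\ldots,a+i+1\}$ precisely because $q/2 + $ (something bounded by $i$) $ < a - i = q^2/2 - i$ for $t \ge 1$. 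Once that disjointness is established, the Hermitian criterion $Z \cap (-qZ) = \emptyset$ gives $C^{\perp_h} \subseteq C$, i.e. $C$ is Hermitian dual-containing, completing the proof.
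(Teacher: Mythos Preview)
Your approach is sound and is in fact the standard way to prove this kind of statement. Note that the paper itself does not give a proof here at all: it simply writes ``See \cite[Lemma~4.2]{LaGuardia:2011}.'' So you have supplied what the present paper omits, and your outline (reduce to the coset criterion $Z\cap(-qZ)=\emptyset$, use $q^2\equiv -1\pmod{n}$ to describe the cosets explicitly, then check disjointness using the bound on $i$) is exactly the argument one expects the cited reference to contain.

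One small correction to the arithmetic in your sketch. With $N=q^2+1$ and $a=q^2/2$ one has $q^2\equiv -1\pmod N$, so $-qa\equiv q/2$, and for $0\le j\le i$
\begin{eqnarray*}
-q(a-j)\equiv \tfrac{q}{2}+qj,\qquad -q(a+j+1)\equiv N-\tfrac{q}{2}-qj \pmod N.
\end{eqnarray*}
Thus $-qZ$ consists of the two arithmetic progressions $\{q/2,\,q/2+q,\ldots,q/2+qi\}$ and $\{N-q/2,\ldots,N-q/2-qi\}$ with common difference $q$, not $1$; the representatives do \emph{not} lie in $\{q/2-i,\ldots,q/2\}$ as you wrote, and the ``something bounded by $i$'' in your final inequality should be ``something bounded by $qi$''. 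The correct bounds are: using $i\le q/2-1$, the first progression has maximum $q/2+qi\le a-q/2<a-i=\min Z$, and the second has minimum $N-q/2-qi\ge a+1+q/2>a+i+1=\max Z$. Hence $Z\cap(-qZ)=\emptyset$ and $C^{\perp_h}\subseteq C$, as desired. With this fix your argument is complete.
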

\begin{proof}
See \cite[Lemma 4.2]{LaGuardia:2011}.
\end{proof}

Although Theorem~\ref{main1} is a Corollary of Theorem~\ref{main},
we consider it as a theorem because the resulting quantum
convolutional codes are MDS.

\begin{theorem}\label{main1}
Assume $q=2^t$, where $t\geq 3$ is an integer, $n=q^2+1$ and
consider that $a=\frac{q^2}{2}$. Then there exist quantum MDS
convolutional codes with parameters $[(n, n-4i, 1; 2, 2i+3)]_{q}$,
where $2\leq i \leq \frac{q}{2} - 2$.
\end{theorem}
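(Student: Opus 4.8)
The plan is to mirror the proof of Theorem~\ref{main}, but working over $F_{q^2}$ with the Hermitian inner product and using Lemma~\ref{cycloH} in place of the Euclidean self-orthogonality that was unavailable. First I would set up the relevant cyclic codes of length $n=q^2+1$ over $F_{q^2}$. Let $a=\frac{q^2}{2}$ and, for the given $i$ with $2\le i\le \frac{q}{2}-2$, define $C_2=\langle M^{(a-i)}(x)\cdots M^{(a)}(x)\rangle$ with defining set $Z_2={\mathcal{C}}_{a-i}\cup\cdots\cup{\mathcal{C}}_{a}$, $C_1=\langle M^{(a-i+1)}(x)\cdots M^{(a)}(x)\rangle$, and $C=\langle M^{(a-i)}(x)\rangle$. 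Using the cyclotomic-coset structure modulo $q^2+1$ over $F_{q^2}$ (each coset ${\mathcal{C}}_{a-j}=\{a-j,\,a+j+1\}$ has two elements and they are disjoint, exactly as in Lemma~\ref{nicecyclo} but with $q$ replaced by $q^2$), I would conclude that $C_2$ is an $[n,\,n-2i-2,\,2i+3]_{q^2}$ MDS code (defining set is a string of $2i+2$ consecutive integers, so the BCH bound gives $d\ge 2i+3$, and MDS forces equality), $C_1$ is an $[n,\,n-2i,\,2i+1]_{q^2}$ MDS code, and $C$ is an $[n,\,n-2,\,d\ge 2]_{q^2}$ code.

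Next I would invoke Lemma~\ref{cycloH}: since $Z_2={\mathcal{C}}_{a-i}\cup\cdots\cup{\mathcal{C}}_{a}$ with $0\le i\le \frac{q}{2}-1$ (and our range $2\le i\le \frac{q}{2}-2$ sits inside this), $C_2$ is Hermitian dual-containing, i.e. $C_2^{\perp_h}\subseteq C_2$. Then I would take the parity check matrix $H_{C_2}$ of $C_2$ (obtained from the $(2i+2)\times n$ Vandermonde-type matrix with rows indexed by exponents $a,a-1,\dots,a-i$, after expanding each entry over an $F_{q^2}$-basis of $(F_{q^2})^{2}$ and noting no rows are dependent since $\dim C_2=n-2i-2$), rearrange its rows so the $a-i$ row sits at the bottom, split $H$ into $H_0$ (the top $2i$ rows, a parity check matrix for $C_1$ up to row permutation) and $H_1$ (the single bottom row, a parity check matrix for $C$), with $\mathrm{rk}\,H_0=2i\ge 1=\mathrm{rk}\,H_1$. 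Forming $G(D)=\tilde H_0+\tilde H_1 D$ gives, by Theorem~\ref{A}(a), a reduced basic generator matrix of a convolutional code $V$ of dimension $2i$, memory $m=1$ and degree $\gamma=2$. By Theorem~\ref{A}(b), since $C_2^{\perp_h}\subseteq C_2$, we get $V\subseteq V^{\perp_h}$.

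Now I would apply Lemma~\ref{BB} with the parameters $(n,(n-k)/2,\gamma;m)_{q^2}$: here $V$ is an $(n,\,2i,\,2;\,1)_{q^2}$ code with $V\subseteq V^{\perp_h}$, so there is an $[(n,\,n-4i,\,1;\,2,\,d_f)]_{q}$ convolutional stabilizer code, where $d_f=\mathrm{wt}(V^{\perp_h}\setminus V)$. To pin down $d_f=2i+3$ I would argue as in Theorem~\ref{main}: $V^{\perp_h}$ has the same parameters as the Euclidean dual analysis gives, and by Theorem~\ref{A}(c) its free distance satisfies $\min\{d_0+d_1,\,d\}\le d_f^{\perp}\le d$ with $d=2i+3$, $d_0=2i+1$, $d_1\ge 2$, hence $d_f^{\perp}=2i+3$; the weight of the coset $V^{\perp_h}\setminus V$ inherits this bound, so $d_f=2i+3$. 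Finally I would substitute $(n,k,m,\gamma,d_f)=(n,\,n-4i,\,1,\,2,\,2i+3)$ into the quantum generalized Singleton bound of Theorem~\ref{SingC}, $d_f\le \frac{n-k}{2}(\lfloor \frac{2\gamma}{n+k}\rfloor+1)+\gamma+1$: here $\frac{n-k}{2}=2i$, and for $n$ large $\lfloor \frac{4}{2n-4i}\rfloor=0$, so the bound reads $2i\cdot 1+2+1=2i+3$, achieved with equality — so the code is MDS.

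The main obstacle will be the precise determination of $d_f=\mathrm{wt}(V^{\perp_h}\setminus V)$ rather than merely $d_f^{\perp}$ of the Euclidean dual; one must check that the Hermitian-dual coset genuinely attains weight $2i+3$ and that the $\lfloor 2\gamma/(n+k)\rfloor$ term in Theorem~\ref{SingC} vanishes for all $i$ in the stated range (which is why $t\ge 3$ and $2\le i\le \frac{q}{2}-2$ are imposed, keeping $n+k=2n-4i$ large enough that $4/(2n-4i)<1$). The rest is a transcription of the length-$(q+1)$ argument to length $q^2+1$ over $F_{q^2}$, so I would present it as a corollary-style proof leaning on Lemma~\ref{cycloH}, Theorem~\ref{A}, Lemma~\ref{BB} and Theorem~\ref{SingC}.
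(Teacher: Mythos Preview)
Your approach is essentially the same as the paper's: the paper simply invokes Theorem~\ref{main} with $q$ replaced by $q^2$ to produce the classical code $V$ and its Euclidean dual $V^{\perp}$ with parameters $(n,n-2i,2;1,2i+3)_{q^2}$, then uses Lemma~\ref{cycloH} together with Theorem~\ref{A}(b) to obtain $V\subset V^{\perp_h}$, applies Lemma~\ref{BB}, and checks Theorem~\ref{SingC} for equality. One minor slip in your write-up: after expanding over an $F_{q^2}$-basis of $F_{q^4}$, the block $H_1$ has \emph{two} rows and $\mathrm{rk}\,H_1=2$ (not one), and this is precisely what makes $\gamma=2$; your stated degree is correct, but the description of $H_1$ as ``the single bottom row'' with rank $1$ is not. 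The paper also makes explicit a point you gloss over, namely that $\mathrm{wt}(V^{\perp_h})=\mathrm{wt}(V^{\perp})$ and that the two duals share the same degree, so that the distance estimate from Theorem~\ref{A}(c) (stated for the Euclidean dual) transfers to $V^{\perp_h}$ and hence yields $d_f\ge 2i+3$ before the Singleton bound closes the argument.
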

\begin{proof}
We consider the same notation utilized in Theorem~\ref{main}. We
know that $\gcd (n, q^{2}) = 1$. From Theorem~\ref{main}, there
exists a classical convolutional MDS code with parameters $(n,
n-2i, 2; 1, 2i+3)_{q^2}$, for each $2\leq i \leq \frac{q}{2} - 2$.
This code is the Euclidean dual $V^{\perp}$ of the convolutional
code $V$ whose parameters are given by $(n, 2i, 2; 1, d_f)_{q^2}$.
The codes $V^{\perp}$ and $V^{{\perp}_{h}}$ have the same degree
as code (see the proof of Theorem 7 in \cite{Klapp:2007}).
Additionally, it is straightforward to check that
wt$(V^{\perp})$=wt$(V^{{\perp}_{h}})$, so $V^{{\perp}_{h}}$ has
parameters $(n, n-2i, 2; m^{*} , 2i+3)_{q^2}$. From
Lemma~\ref{cycloH} and from Theorem~\ref{A} Item (b), one has $V
\subset V^{{\perp}_{h}}$. Applying Lemma~\ref{BB}, there exists an
${[(n, n-4i, 1; 2, d_f\geq 2i+3)]}_{q}$ convolutional stabilizer
code, for each $2\leq i \leq \frac{q}{2} - 2$. Replacing the
parameters of the previously constructed codes in the quantum
generalized Singleton bound (Theorem~\ref{SingC}) one has the
equality $2i+3 = 2i\left( \left\lfloor \frac{4}{2n-4i}
\right\rfloor + 1\right) + 2 + 1$. Therefore, there exist
MDS-convolutional stabilizer codes with parameters ${[(n, n-4i, 1;
2, 2i+3)]}_{q}$, for each $2\leq i \leq \frac{q}{2} - 2$.
\end{proof}

\begin{example}
To illustrate the previous construction, assume that $q=8$, $n=65$
and $i=2$. Applying Theorem~\ref{main1} there exists an $[(65, 57,
1; 2 , 7)]_{8}$ convolutional stabilizer code that attains the
generalized quantum Singleton bound.

Considering $q=16$, $n=257$ and $i=2, 3, 4, 5$ then one has
quantum MDS codes with parameters $[(257, 249, 1; 2, 7)]_{16}$,
$[(257, 245, 1; 2, 9)]_{16}$, $[(257, 241, 1; 2, 11)]_{16}$,
$[(257, 237, 1; 2, 13)]_{16}$, respectively, and so on.
\end{example}


\section{Summary}\label{VI}
In this paper we have constructed several new families of
multi-memory classical convolutional BCH codes. The families of
unit-memory codes are optimal in the sense that they attain the
classical generalized Singleton bound. Moreover, we also have
constructed families of unit-memory optimal quantum convolutional
codes in the sense that these codes attain the quantum generalized
Singleton bound. All the constructions presented here are
performed algebraically and not by exhaustively computational
search.

\section*{Acknowledgment}
I would like to thank the anonymous referees and the Associate
Editor Prof. Dr. Alexei Ashikhmin for their valuable comments and
suggestions that improve significantly the quality and the
presentation of this paper. This research has been partially
supported by the Brazilian Agencies CAPES and CNPq.

\textbf{Giuliano G. La Guardia received the M.S. degree in pure
mathematics in 1998 and the Ph.D. degree in electrical engineering
in 2008, both from the State University of Campinas (UNICAMP), São
Paulo, Brazil. Since 1999, he has been with the Department of
Mathematics and Statistics, State University of Ponta Grossa,
where he is an Associate Professor. His research areas include
theory of classical and quantum codes, matroid theory, and error
analysis.}

\end{document}